\newtheorem{theorem}{Theorem}
\newtheorem{corollary}{Corollary}
\newtheorem{proposition}{Proposition}
\theoremstyle{definition}
\theoremstyle{remark}
\def\convp{\stackrel{\mathrm{P}}{\longrightarrow}}
\DeclareMathOperator{\doop}{do}
\DeclareMathAlphabet{\mathsc}{OT1}{cmr}{m}{sc}
\definecolor{graphcol1}{rgb}{0.89, 0.10, 0.11}
\definecolor{graphcol2}{rgb}{0.22, 0.49, 0.72}
\definecolor{graphcol3}{rgb}{0.30, 0.69, 0.29}
\definecolor{graphcol4}{rgb}{0.60, 0.31, 0.64}
\newlength{\edgelength}
\newcommand{\grline}{\mathbin{\tikz[baseline] \draw[-] (0pt, 0.7ex) -- (\edgelength, 0.7ex);}}
\newcommand{\threegraph}[6]{%
  \begin{tikzpicture}[baseline=(one.base)]
    \node[anchor=base east] (one) at (0, 0) {#1};
    \node[anchor=base west] (two) at (1.2, 0) {#3};
    \node[anchor=base] (three) at (0.6, -0.5) {#5};
    \ifthenelse{\equal{#2}{}}{}{\draw[#2] (one.mid east) -- (two.mid west);}
    \ifthenelse{\equal{#4}{}}{}{\draw[#4] (two) -- (three);}
    \ifthenelse{\equal{#6}{}}{}{\draw[#6] (three) -- (one);}
  \end{tikzpicture}
}
\newlength{\exgredge}
\newenvironment{fournodeex2}{%
\begin{tikzpicture}[baseline=(v1.base)]
  \node (v1) at (0, \exgredge) {$X_1$};
  
  \node (v2) at (-\exgredge, 0) {\textcolor{red}{$X_2=x$}};
  \node (v3) at (\exgredge, 0) {$X_3$};
  \node (v4) at (0, -\exgredge) {$Y$};
}
{\end{tikzpicture}}
\newenvironment{fournodeex3}{%
\begin{tikzpicture}[baseline=(v1.base)]
  \node (v1) at (0, \exgredge) {$X_1$};
  \node (v2) at (-\exgredge, 0) {$X_2$};
  \node (v3) at (\exgredge, 0) {$X_3$};
  \node (v4) at (0, -\exgredge) {$Y$};
}
{\end{tikzpicture}}
\begin{document}

\title{Estimating Causal Effects From Nonparanormal Observational Data}
\author{Seyed Mahdi Mahmoudi,~Ernst Wit}
\affil{Johann Bernoulli Institute for Mathematics and Computer Science,\\
University of Groningen}

\date{}
\maketitle

\begin{abstract}
One of the basic aims in science is to unravel the chain of cause and effect of particular systems. Especially for large systems this can be a daunting task. Detailed interventional and randomized data sampling approaches can be used to resolve the causality question, but for many systems such interventions are impossible or too costly to obtain.
Recently, \cite{maathuis2010predicting}, following ideas from \cite{spirtes2000causation}, introduced a framework to estimate causal effects in large scale Gaussian systems. By describing the causal network as a directed acyclic graph it is a possible to estimate a class of Markov equivalent systems that describe the underlying causal interactions consistently, even for non-Gaussian systems. In these systems, causal effects stop being linear and cannot be described any more by a single coefficient. In this paper, we derive the general functional form of causal effect in a large subclass of non-Gaussian distributions, called the nonparanormal. We also derive a convenient approximation, which can be used effectively in estimation. We apply the method to an observational gene  expression dataset.
\end{abstract}

\noindent\textbf{Keywords:} Causal effects, Directed acyclic graph (DAG), Graphical modeling,
Nonparanormal distribution, PC-algorithm, Gaussian copula.


\maketitle


%

\section{Introduction}
\label{s:intro}

Inferring cause-and-effect relationships between variables is of primary importance in many fields of  science. The classical approach for determining such relationships uses randomized experiments where a single or few variables are perturbed. Such intervention experiments, however,
can be very expensive, unethical (e.g. one cannot  force a
randomly selected person to smoke many cigarettes a day) or even infeasible. Hence, it is desirable to
infer causal effects from so-called observational data obtained by observing a system
without subjecting it to interventions. 
Although some important concepts
and ideas have been worked out \citep{spirtes1995causal,richardson1996discovery,mooij2011causal},
causal inference allowing for cyclic graphs is still in its infancy.

\cite{pearl2009causality} described a do-calculus of causal effects, if the underlying
causal diagram is known. In practice, though,
the influence diagram is often not known and one would like to infer causal effects
from observational data together with the influence diagram.
\cite{spirtes2000causation} introduced methods to estimate causal graphs from observational
data, based on a specified causal influence diagram describing qualitatively the causal
relations among variables. \cite{verma1990equivalence} found that typically groups of causal
graphs give rise to the same distribution of the data, which implies that the generating causal DAG is typically
unidentifiable from the data.
These groups of causal
graphs have characterized Markov equivalence
classes for causal DAGs, which called completed partially directed acyclic graph (CPDAG).
It has presented many  algorithms for constructing and estimated CPDAG in different ways.  There are several constraint-based causal search algorithms such as
 search and score methods \citep{chickering2002learning,chickering2003optimal,verma1990equivalence}, the PC-algorithm
\citep{spirtes2000causation}  and Bayesian methods
\citep[]{heckerman1995learning,spiegelhalter1993bayesian}.\\
The PC-algorithm 
\citep{spirtes2000causation} is one of the main algorithms that try to find equivalence class in two steps: first, by estimating the skeleton using conditional independence tests  and
the characterization of the skeleton; second, orienting as many edges as
possible. \cite{kalisch2007estimating} used PC- algorithm  for Gaussian observations and proved high-dimensional consistency for this algorithm.
\cite{maathuis2009estimating} propose a method that based on 
 estimated causal structure from \cite{kalisch2007estimating}, they could used the interventional distribution in the  Gaussian case to drive causal effect from random varibales. Based on  Gaussian structure, they showed that  one can find  the causal effect by a set of constants. 
\cite{harris2013pc} show that  for wide range of distibution the PC-algorithm has high-dimensional consistency. They use rank-based measures of correlations, such as Spearman’s rank correlation and Kendall’s tau,  in tests of conditional independence. In the terminology of \cite{liu2012high}, this broader class that include marginal Gaussian copula is called   ``nonparanormal distributions.''

In the remainder of the paper, we assume the use of the Rank PC (RPC) algorithm \citep{harris2013pc}, i.e. the PC-algorithm in the nonparanormal context. Based on the estimated CPDAG, it is our aim to derive the concept of a causal effect of $x$ on $y$ as a collection of functions of $x$ and to find a consistent way to estimate them. In Section \ref{sec:causalgraphs}, we introduce the causal graph terminology, a  short description of the intervention calculus and the definition of a causal effect. In Section \ref{sec:nonparanormalCE}, we derive the structure of a causal effect of a nonparanormal causal effect and in Section  \ref{sec:NCE}, we define an convenient estimator. In Section  \ref{sec:simulation-methods}, we evaluate the performance of our method in a simulation study. In Section \ref{sec:real data}, we illustrate the method in a real data example.

\section{Causal effects in causal graphs}
\label{sec:causalgraphs}
In this section we describe the background needed in order to define the notion of a causal effect. We begin by defining causal models through directed graphical models.
%
%
%
%

%

A \textit{graph} is a pair $ G=({V},{E})$, where $V$ is a finite set of \emph{vertices} ${V} =  \{1, 2, \ldots, p\}$, also called \emph{nodes}, of $G$ and $E$  is a subset of $({V}\times {V})$ of ordered pairs of vertices, called the \emph{edges}  or \emph{links} of $G$. We consider $p$  random variables $X_1, \ldots, X_p$, associated to the vertices. 
If  edge $(X_i, X_j) \in E$ but $(X_j, X_i) \notin E$,  we call the edge \textit{directed} or an \emph{arrow}, denoted by $X_i \rightarrow X_j$. In that case, we also say that $X_i$ is a \emph{parent} of $X_j$, and that $X_j$ is a \emph{child} of $X_i$. The set of parents of a vertex $X_j$ is denoted by pa($X_j$).  We use the short-hand notation 
$X_i \grline X_j$ to denote $(X_i, X_j) \in E$ and $(X_j, X_i) \in E$. 
A graph containing only directed edges ($\rightarrow$) is \emph{directed},
one containing only undirected edges ($\grline $) is \emph{undirected}. A  directed graph is called a
\emph{directed acyclic graph }(DAG) if it does not contain
directed cycles. A common tool for describing equivalence classes of DAGs are completed partially directed
acyclic graphs (CPDAG).
\cite{pearl2009causality} defined causality through intervention, whereby variables are externally manipulated to take certain values. This intervention changes the underlying distribution $P$ and can be expressed by adapting the direct effect diagram. The new distribution is called the \emph{intervention distribution} and we say that the variables, whose structural equations we have replaced have been ``intervened on.''
The intervention distribution of $Y$ when doing an intervention and setting the variable $X_i$ to a value $x_i^{\prime}$ is denoted by $P(Y|\doop(X_i = \it{x_i^{\prime}}))$.
The intervention on variable $X_i$ is characterized by a \emph{truncated factorization}, in which an intervention
DAG $G'$, arising from the non-intervention DAG $G$ can be defined by deleting all edges
which point into the node $X_i$. For an example, In below graphs, a DAG $G$ and  its corresponding intervention graphs ($G'$) are shown.

%

\begin{figure}[tbh]
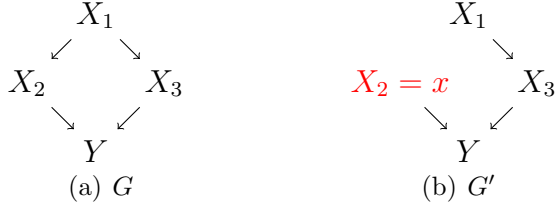

  \centering
  \subfigure[$G$]{%
    \begin{fournodeex3}
      \draw[->] (v1) -- (v2);
      \draw[->] (v1) -- (v3);
      \draw[->] (v2) -- (v4);
      \draw[->] (v3) -- (v4);
    \end{fournodeex3}
  } \qquad \qquad
  \subfigure[$G'$]{%
    \begin{fournodeex2}
      \draw[->] (v1) -- (v3);
      \draw[->] (v2) -- (v4);
      \draw[->] (v3) -- (v4);
    \end{fournodeex2}
  } \qquad \qquad
  
  \caption{(a) A DAG $G$ and (b) its corresponding intervention graphs $G^{\prime}$. The intervention is $\doop(X_2 = x)$, described by the red label in the graph. The parental set of $i = 2$ is
$\mbox{pa}{(2)} = \{1\}$ which appears in (\ref{eqn:causal effect}) for computing the causal effect $\beta_2$ of $Y$ on $X_2$. 
  }
  \label{fig:ex-intervention-dags}
\end{figure}

%
%
%
%
%
The \emph{causal effect} of $X_i$ on $Y$ at a point $x_i^{\prime}$ by the way $Y$ is expected to change as a result from a small interventional  change of $X_i$ at $x_i$,
\begin{equation}
\mbox{CE}(Y| X_i = x_i) = {\frac{\partial}{\partial x}} E[Y|\doop(X_i = \it{x})]|_{x = x_i^{\prime}}
\label{eqn:causal effect}
\end{equation}
where we have that 
\begin{equation}
E(Y|{\doop(X_i = \it{x})})
=
\int E(Y|x,x_{\mbox{\scriptsize pa}(i)})\ P(x_{\mbox{\scriptsize pa}(i)})\ d(x_{\mbox{\scriptsize pa}(i)}) \hspace{10mm}  \mbox{if}\ Y\notin {x_{\mbox{\scriptsize pa}(i)}}
\label{eqn:Intervention effect}
\end{equation}
If $( X_1,...,X_{p-1},Y)$ has a multivariate Gaussian distribution, it is very simple to compute the causal effects as defined in (\ref{eqn:causal effect}).  
%
Therefore, we have 
\begin{equation}
E(Y|{\doop{(X_i = \it{x}_i)}})
=\beta_i x_i +\int \beta_{{\mbox{\scriptsize pa}(i)}}^T x_{\mbox{\scriptsize pa}(i)} P(x_{\mbox{\scriptsize pa}(i)})\ d(x_{\mbox{\scriptsize pa}(i)}) 
\label{eqn:Intervention effect for normal}
\end{equation}
 is linear in $x_i$, if $  Y\notin {x_{\mbox{\scriptsize pa}(i)}}$
and then the intervention effect, or causal effect, becomes 
\begin{equation}
\mbox{CE}(Y| X_i = x)={\frac{\partial}{\partial x}} E[Y|\doop(X_i = \it{x})]|_{x = x_i^{\prime}} = \beta_i
\label{eqn:causal effect for Gauss}
\end{equation}
A simple way to obtain the parameter $\beta_i$ is given by Pearl’s backdoor criterion
\citep{pearl2009causality}.  From (\ref{eqn:causal effect for Gauss}), it follows that the causal effect
of $X_i$ on $Y$ with  $  Y\notin {x_{\mbox{\scriptsize pa}(i)}}$ is given by the regression
coefficient of $X_i$ in the regression of $Y$ on $X_i$ and ${\mbox{pa}(i)}$. 
Note that if $  Y\in {x_{\mbox{\scriptsize pa}(i)}}$, the causal effect from $X_i$ to $Y$ is, obviously, zero. Our aim is to generalize this to a wider class of distributions. 

\section{Causal effect for nonparanormal graphical models}
\label{sec:nonparanormalCE}
\cite{kalisch2007estimating} use the PC-algorithm in a Gaussian setting 
 for estimating the causal skeleton and, subsequently, the equivalence class of high-dimensional causal graph. The algorithm is based on a clever hierarchical scheme for  testing conditional independences among pairs of variables $X _j , X_k$ (for all $ j \not= k$) in the DAG. 
In Gaussian models, tests of conditional independence can be 
based on Pearson correlations, and high-dimensional consistency results have been obtained for the
PC-algorithm in this setting.

Building on this work, \cite{maathuis2009estimating} are  interested in estimating the causal effect of a covariate
$X_i $ on a response $Y$ in a Gaussian causal graph. After obtaining the equivalence class of causal DAG,  they apply for each DAG $G_j$ in this class the  intervention calculus to obtain
the causal effect $\beta_ {ij}$ of $X_i$ on $Y$, which can easily be shown to be the regression coefficient in 
\begin{equation}
E\left[Y|X_i=x_i; X_{\mbox{\scriptsize pa}(i)} = x_{\mbox{\scriptsize pa}(i)}\right]=\beta_{0j}+\beta_{ij} x_i+ \beta_{\mbox{\scriptsize pa}(i),j}^T x_{\mbox{\scriptsize pa}(i)}
\label{eqn: linear causality for Gaussian}
\end{equation}
where ${\mbox{pa}(i)}$ is the parental index set of $X_i$ in graph $G_j$, 
and then summarize this information for
$i = 1, \ldots, p $ and $j = 1,\ldots, m$ in a $p \times m$ matrix $\bf{\Theta}$.
%

In this section, we prove how based on this CPDAG we can derive the analogous multi-set of causal effects for Gaussian copula, also called \emph{nonparanormal}, distributed data. In practice, the conditional independences have to be inferred from the data as well and we show how using our main result in combination with the RPC-algorithm we are able to define an convenient estimator for the causal effect for such data, which stops being linear and needs to be estimated functionally.

\subsection{General expression of nonparanormal causal effect}

\cite{liu2012high}  define the nonparanormal distribution. Let $f=(f_i)_{i\in \mathbf{V}}$ be a set of monotone, univariate functions and let $ \Sigma\in \mathbb{R}^{\mathbf{V}\times \mathbf{V}} $ be a positive definite covariance matrix. We say a $p$-dimensional random variable $X=(X_1,...,X_{p})^\mathbf{T}$ has a nonparanormal distribution, $$X\thicksim \mbox{NPN}(\mu,\Sigma,f),$$ if $f^{-1}(X)=(f_1^{-1}(X),\ldots,f_p^{-1}(X))\thicksim N(\mu,\Sigma) $. If $X\thicksim \mbox{NPN}(\mu,\Sigma,f)$, then the univariate marginal distribution for a coordinate, say $X_i$, can have any distribution $F_i$, as we can take $f_i=F_{i}^{-1}\circ \Phi_{\mu_i,\sigma_i^2}$, where $\Phi_{\mu_i,\sigma_i^2}$ is the normal distribution function with mean $\mu_i$ and variance $\sigma_i^2=\Sigma_{ii}$. Note that $f_i$ need not be continuous.
In this paper, we deal with monotone and differentiable $f$.  \cite{liu2012high} show that in that case the nonparanormal distribution $\mbox{NPN}(\mu,\Sigma,f)$ is a Gaussian copula.

In the remainder of the paper, we consider that $(X_1, \ldots,X_{p-1},Y)\thicksim \mbox{NPN}(0,\Sigma,f)$, where $\Sigma$ is a correlation matrix. We will refer to the latent standard normally distributed variables as $Z_i=f_{i}^{-1}(X_i) = \Phi^{-1}\circ F_i(X_i)$ and $Z=f_{y}^{-1}(Y) = \Phi^{-1}\circ F_y(Y)$.  We are interested in the causal effect of $X_i$ on $ Y$ for $i\in (1,\ldots,p-1)$. 
We know from Section (\ref{sec:causalgraphs}) that for Gaussian data it is very simple to compute the causal effect, since Gaussianity implies that $ E(Y|X_i=x_i; X_{-i}=x_{-i})$ is linear in $x_i$. Unfortunately, this is no longer true for non-Gaussian random variables. In Theorem \ref{thm:npnCE} we derive the explicit functional form for the causal effect in the entire class of nonparanormal distributions. 

\begin{theorem}
\label{thm:npnCE}
Let $(X_1, \ldots,X_{p-1},Y)\thicksim \mbox{NPN}(0,\Sigma,f)$ and $f_i$ $(i=1,\ldots,p-1)$ is differentiable and  $f_y$ is infinitely differentiable, then the causal effect of $X_i$ on $Y$ in causal graph $G$ is given by
\begin{eqnarray}
\mbox{CE}(Y| X_i = x_i) &=& \sum_{k=1}^{\infty}\sum_{r=0}^{\lfloor \frac{k-1}{2}\rfloor}\sum_{s=1}^{k-2r} f_y^{(k)}(z_{0})\frac{1}{k!} {k-2r\choose s} {k\choose 2r}s\beta_i(-z_0 +\beta_i z_i)^{s-1} \label{eq:ce} \\\nonumber
 &\times & E[(\beta_{{\mbox{\scriptsize pa}(i)}}^{T} Z_{\mbox{\scriptsize pa}(i)})^{k-2r-s}]
  (2r-1)\ldots 3.1\times[(1-\rho ^2)]^r(f_i^{-1})^{\prime}(x_i) 
\end{eqnarray}
for every $z_0\in \mathbb{R}$, where $f_y^{(k)}$ is the $k$th derivative of $f_y$,$z_i = f_i^{-1}(x_i)$, $ Z_{{\mbox{\scriptsize pa}(i)}}= f_{\mbox{\scriptsize pa}(i)}^{-1}( X_{\mbox{\scriptsize pa}(i)})$, $(\beta_i,\beta_{{\mbox{\scriptsize pa}(i)}}) = \Sigma_{p,(i,\mbox{\scriptsize pa}(i))}\Sigma_{(i,\mbox{\scriptsize pa}(i)),(i,\mbox{\scriptsize pa}(i))}^{-1}$ and $\rho=\Sigma_{p,(i,\mbox{\scriptsize pa}(i))}\Sigma_{(i,\mbox{\scriptsize pa}(i)),(i,\mbox{\scriptsize pa}(i))}^{-1}\Sigma_{(i,\mbox{\scriptsize pa}(i)),p}$. 
\end{theorem}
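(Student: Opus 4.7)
The plan is to push the whole computation into the latent Gaussian world and then translate back by Taylor-expanding $f_y$. First I would apply the truncated-factorization formula (\ref{eqn:Intervention effect}) to write
$E[Y|\doop(X_i = x_i)] = E_{X_{\mbox{\scriptsize pa}(i)}}\!\left[E[Y \mid X_i=x_i, X_{\mbox{\scriptsize pa}(i)}]\right].$
Since $Y=f_y(Z)$ with $Z=f_y^{-1}(Y)$ and all the $f_j$ are monotone, conditioning on $(X_i, X_{\mbox{\scriptsize pa}(i)})$ is the same as conditioning on $(Z_i, Z_{\mbox{\scriptsize pa}(i)})$ with $z_j = f_j^{-1}(x_j)$. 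Because $(Z_1,\ldots,Z_{p-1},Z)\sim N(0,\Sigma)$, standard Gaussian conditioning gives $Z \mid Z_i=z_i, Z_{\mbox{\scriptsize pa}(i)}=z_{\mbox{\scriptsize pa}(i)} \sim N(\mu^*, 1-\rho^2)$, where $\mu^* = \beta_i z_i + \beta_{\mbox{\scriptsize pa}(i)}^T z_{\mbox{\scriptsize pa}(i)}$ and $(\beta_i,\beta_{\mbox{\scriptsize pa}(i)})$ and $\rho$ are as defined in the statement.

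Next I would expand $f_y$ around an arbitrary anchor $z_0$ as $f_y(Z) = \sum_{k=0}^{\infty} \frac{f_y^{(k)}(z_0)}{k!}(Z-z_0)^k$ and, assuming we can swap sum and conditional expectation, compute $E[(Z-z_0)^k \mid Z_i, Z_{\mbox{\scriptsize pa}(i)}]$ in two binomial steps. The first splits $Z - z_0 = (Z-\mu^*)+(\mu^*-z_0)$: odd central moments of the Gaussian vanish, while the even ones, with $m=2r$, contribute the factor $(2r-1)!!\,(1-\rho^2)^r$, producing the $\binom{k}{2r}$, $(2r-1)!!$ and $(1-\rho^2)^r$ pieces. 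The second expands $(\mu^* - z_0)^{k-2r} = (-z_0+\beta_i z_i + \beta_{\mbox{\scriptsize pa}(i)}^T z_{\mbox{\scriptsize pa}(i)})^{k-2r}$ to separate the $x_i$-dependent part $(-z_0+\beta_i z_i)^s$ from the parental part $(\beta_{\mbox{\scriptsize pa}(i)}^T z_{\mbox{\scriptsize pa}(i)})^{k-2r-s}$, contributing the $\binom{k-2r}{s}$ factor.

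Then I would average over the parental marginal as required by (\ref{eqn:Intervention effect}). The change of variables $x_{\mbox{\scriptsize pa}(i)} \mapsto z_{\mbox{\scriptsize pa}(i)} = f_{\mbox{\scriptsize pa}(i)}^{-1}(x_{\mbox{\scriptsize pa}(i)})$ transforms $P(x_{\mbox{\scriptsize pa}(i)})\,dx_{\mbox{\scriptsize pa}(i)}$ into the Gaussian marginal of $Z_{\mbox{\scriptsize pa}(i)}$ induced by $\Sigma$, so the parental factor becomes $E[(\beta_{\mbox{\scriptsize pa}(i)}^T Z_{\mbox{\scriptsize pa}(i)})^{k-2r-s}]$. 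Finally I differentiate with respect to $x_i$. Only $(-z_0+\beta_i z_i)^s$ depends on $x_i$, and the chain rule yields $s\beta_i(-z_0+\beta_i z_i)^{s-1}\,(f_i^{-1})'(x_i)$; the $s=0$ terms drop out, which is exactly why the inner sum in (\ref{eq:ce}) starts at $s=1$, and the same reasoning makes the $k=0$ term vanish, giving the outer sum from $k=1$. Collecting constants reproduces the stated formula.

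The main obstacle is rigorously justifying the two interchanges of limit and integration: pulling the infinite Taylor sum under both the conditional expectation $E[\,\cdot\, \mid Z_i, Z_{\mbox{\scriptsize pa}(i)}]$ and the outer integral against $P(x_{\mbox{\scriptsize pa}(i)})$. Since the theorem only assumes $f_y$ infinitely differentiable, this really needs either a growth bound on $|f_y^{(k)}(z_0)|/k!$ ensuring absolute convergence against all Gaussian moments (e.g.\ entire $f_y$ of sufficiently small order), or an explicit remainder argument combined with dominated convergence using the fact that every polynomial in $Z_i$ and $Z_{\mbox{\scriptsize pa}(i)}$ is integrable. I would state this additional hypothesis explicitly and carry out the verification there; the rest of the derivation is bookkeeping with binomial coefficients and Gaussian moments.
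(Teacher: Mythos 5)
Your proposal follows essentially the same route as the paper's proof: Taylor-expand $f_y$ about $z_0$, use the Gaussian conditional law of $Z$ given $(Z_i, Z_{\mathrm{pa}(i)})$ together with the central-moment formula to get the $\binom{k}{2r}(2r-1)!!(1-\rho^2)^r$ factors, binomially split off the parental part before integrating it out, and finish with the chain rule through $(f_i^{-1})'(x_i)$. Your closing remark that the interchange of the infinite Taylor sum with the conditional expectation and the outer integral needs a growth condition on $f_y^{(k)}$ (or a dominated-convergence argument) is a point the paper passes over silently, so if anything your version is the more careful one.
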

\begin{proof}
  We follow three steps for proving this theorem. First, we find a closed form expression for $E\left[Y|X_i=x_i; X_{\mbox{\scriptsize pa}(i)} = x_{\mbox{\scriptsize pa}(i)}\right]$. After that we connect this to the do-operator as is done in (\ref{eqn:Intervention effect}). Finally, taking the derivative in the way that the causal effect is defined in (\ref{eqn:causal effect}) will complete the proof. From the differentiability of $f_i$ follows that  the marginal distributions $F_i$ are one-to-one, where $f_i^{-1}(x_i)=z_i$ and $Z_i=f_{i}^{-1}(X_i) = \Phi^{-1}\circ F_i(X_i)$ and $Z=f_{y}^{-1}(Y) = \Phi^{-1}\circ F_y(Y)$. Using the Taylor expansion,
 \begin{eqnarray}\nonumber
E\left[Y|X_i=\it{x_i}; X_{\mbox{\scriptsize pa}(i)} = x_{\mbox{\scriptsize pa}(i)}\right]&=&E(F_y^{-1}(\Phi(Z))|X_i=x_i; X_{\mbox{\scriptsize pa}(i)} = x_{\mbox{\scriptsize pa}(i)})\\\nonumber
&= &E(F_y^{-1}(\Phi(Z))|Z_i=z_i; Z_{{\mbox{\scriptsize pa}(i)}} = z_{{\mbox{\scriptsize pa}(i)}})\\\nonumber
&= &E(f_y(Z)|Z_i=z_i; Z_{{\mbox{\scriptsize pa}(i)}} = z_{{\mbox{\scriptsize pa}(i)}})\\\nonumber
&= & E(\sum_{k=1}^{\infty}f_y^{(k)}(z_{0})\frac{(Z-z_{0})^k}{k!}|Z_i=z_i; Z_{{\mbox{\scriptsize pa}(i)}} = z_{{\mbox{\scriptsize pa}(i)}})\\
&=&\sum_{k=1}^{\infty}f_y^{(k)}(z_{0})\frac{1}{k!}E({Z^*}^k|Z_i=z_i; Z_{{\mbox{\scriptsize pa}(i)}} = z_{{\mbox{\scriptsize pa}(i)}})
\label{eqn:Taylor expansion}
\end{eqnarray}
where $Z^*=Z-z_{0}$ for any $z_0 \in \mathbb{R}$. From the conditional normal distribution, we know that
$$Z^*|Z_i=z_i; Z_{{\mbox{\scriptsize pa}(i)}} = z_{{\mbox{\scriptsize pa}(i)}} \sim N(-z_0 +  (\beta_i,\beta_{\mbox{\scriptsize pa}(i)}) (z_i, z_{\mbox{\scriptsize pa}(i)})^T,(1-\rho ^2)).$$  
where $(\beta_i,\beta_{{\mbox{\scriptsize pa}(i)}}) = \Sigma_{p,(i,\mbox{\scriptsize pa}(i))}\Sigma_{(i,\mbox{\scriptsize pa}(i)),(i,\mbox{\scriptsize pa}(i))}^{-1}$ and $\rho=\Sigma_{p,(i,\mbox{\scriptsize pa}(i))}\Sigma_{(i,\mbox{\scriptsize pa}(i)),(i,\mbox{\scriptsize pa}(i))}^{-1}\Sigma_{(i,\mbox{\scriptsize pa}(i)),p}$. 
Following \cite{lehmann1998theory} page 132, we get for $k\in\mathbb{N}$
 \begin{eqnarray}\nonumber
E({Z^*}^k|Z_i=z_i; Z_{{\mbox{\scriptsize pa}(i)}} = z_{{\mbox{\scriptsize pa}(i)}})&=& \sum_{r=0}^{\lfloor \frac{k}{2}\rfloor} {k\choose 2r}(-z_0 +\beta_i z_i+\beta_{{{\mbox{\scriptsize pa}(i)}}}^{T}z_{{\mbox{\scriptsize pa}(i)}})^{k-2r}\\  
 &\times &  (2r-1)\ldots 3.1\times[(1-\rho ^2)]^r 
 \label{eqn:Lehmann and Casella}
\end{eqnarray}
 With replacement (\ref{eqn:Lehmann and Casella}) in (\ref{eqn:Taylor expansion}) we have
 \begin{eqnarray}\nonumber
E(Y|X_i=\it{x_i}; X_{\mbox{\scriptsize pa}(i)} = x_{\mbox{\scriptsize pa}(i)})&=&  \sum_{k=1}^{\infty}\sum_{r=0}^{\lfloor \frac{k}{2}\rfloor} f_y^{(k)}(z_{0})\frac{1}{k!} {k\choose 2r}(-z_0 +\beta_i z_i+\beta_{{\mbox{\scriptsize pa}(i)}}^{T}z_{{\mbox{\scriptsize pa}(i)}})^{k-2r}\hspace{-2cm}\\ 
 &\times &  (2r-1)\ldots 3.1\times[(1-\rho ^2)]^r
 \label{eqn:Conditional Expectation for nonparanormal} 
\end{eqnarray}
Now we use (\ref{eqn:Conditional Expectation for nonparanormal}) for finding the intervention effect for nonparanormal variable. That is, 
\begin{eqnarray}
E(Y|{\doop(X_i = \it{x_i})})
&=&\int E(Y|X_i=\it{x_i}; X_{\mbox{\scriptsize pa}(i)} = x_{\mbox{\scriptsize pa}(i)})P(x_{\mbox{\scriptsize pa}(i)})\ d(x_{\mbox{\scriptsize pa}(i)}) \hspace{8mm} \mbox{if}\ Y\notin x_{\mbox{\scriptsize pa}(i)}\nonumber\\
&=& \sum_{k=1}^{\infty}\sum_{r=0}^{\lfloor \frac{k}{2}\rfloor} f_y^{(k)}(z_{0})\frac{1}{k!}{k\choose 2r}\times  (2r-1)\ldots 3.1\times[(1-\rho ^2)]^r\nonumber\\& \times & \sum_{s=0}^{k-2r}{k-2r\choose s} (-z_0 +\beta_i z_i)^{s} \int(\beta_{{\mbox{\scriptsize pa}(i)}}^{T}z_{{\mbox{\scriptsize pa}(i)}})^{k-2r-s} P(z_{\mbox{\scriptsize pa}(i)})\ d(z_{\mbox{\scriptsize pa}(i)})
\nonumber\\
&=& \sum_{k=1}^{\infty}\sum_{r=0}^{\lfloor \frac{k}{2}\rfloor} f_y^{(k)}(z_{0})\frac{1}{k!}{k\choose 2r}\times  (2r-1)\ldots 3.1\times[(1-\rho ^2)]^r \nonumber \hspace{-3mm}\\  &\times & \sum_{s=0}^{k-2r}{k-2r\choose s} (-z_0 +\beta_i z_i)^{s}E[(\beta_{x_{\mbox{\scriptsize pa}(i)}}^{T} Z_{{\mbox{\scriptsize pa}(i)}})^{k-2r-s}]
\label{eqn:Intervention effect for nonparanormal}
\end{eqnarray}
We get the following expression for the causal effect,
\begin{eqnarray}
{\frac{\partial}{\partial x_i}}E[Y|{\doop(X_i = \it{x_i})}] 
&=&{\frac{\partial}{\partial z_i}}E[Y|{\doop(X_i = x_i)}]{\frac{\partial z_i}{\partial x_i}}
\label{eqn:drivitiveCE}
\end{eqnarray}
where ${\frac{\partial z_i}{\partial x_i}}=(f_i^{-1})^{\prime}(x_i)$. Therefore, with plugging (\ref{eqn:Intervention effect for nonparanormal}) into (\ref{eqn:drivitiveCE}) proof is completes .
\end{proof}
 We  have obtained the general expression (\ref{eq:ce}) for a nonparanormal causal effect. The value of this theorem is that it gives us insight in how higher order moments of the effect $Y$, captured in the higher order derivatives of $f_y$, affect the causal effect, whereas higher order moments of the cause $X_i$ do not.  In practice, this formula is not very helpful as it contains information about the system that we typically do not possess, such as the correlation structure of the latent normal variable. However, this formula can inspire practical estimation procedures of the causal effects in nonparanormal systems. Whereas this is in principle possible, we restrict our attention in this paper to a lower order Taylor approximations in section (\ref{sec:NCE}), since they tend to be more stable. 

\subsection{Special case}
\label{sec:Special Cases of Theorem}
We consider the special case of the above theorem for the situation that only $Y$ is normally distributed, and the $X_i$s are still nonparanormal. 

\begin{corollary}
Let $(X_1, \ldots,X_{p-1})\thicksim \mbox{NPN}(0,\Sigma,f)$ and $f_i$ $(i=1,\ldots,p-1)$ is differentiable and $Y\sim N(\mu,\sigma^2)$, then the causal effect of $X_i$ on $Y$ in causal graph $G$ is given by
\begin{eqnarray}
\mbox{CE}(Y| X_i = x_i) &=& \sigma \beta_i (f_i^{-1})^{\prime}(x_i) 
\label{eqn:Causal effect for nonparanormal linear}
\end{eqnarray}
where $\beta_i$ is defined as in Theorem \ref{thm:npnCE}. 
\end{corollary}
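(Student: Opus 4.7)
The plan is to obtain the corollary as a direct specialization of Theorem \ref{thm:npnCE}, with the key observation being that normality of $Y$ forces the latent transformation $f_y$ to be affine, so that its Taylor series collapses to a single term.

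First, I would identify $f_y$ explicitly. Because the latent variable $Z = f_y^{-1}(Y) = \Phi^{-1}\circ F_y(Y)$ must be standard normal and $Y \sim N(\mu,\sigma^2)$ means $F_y(y) = \Phi((y-\mu)/\sigma)$, we get $f_y(z) = \mu + \sigma z$. In particular, $f_y^{(1)}(z) \equiv \sigma$ and $f_y^{(k)}(z) \equiv 0$ for every $k \geq 2$, so the infinite outer sum in (\ref{eq:ce}) reduces to the single summand $k=1$.

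Next I would evaluate the remaining (now finite) sum at $k=1$. The index $r$ ranges over $0 \leq r \leq \lfloor 0/2 \rfloor = 0$, forcing $r=0$; and $s$ ranges over $1 \leq s \leq 1$, forcing $s=1$. Plugging in: $\binom{1}{1}=1$, $\binom{1}{0}=1$, the factor $(-z_0+\beta_i z_i)^{s-1} = 1$, the double-factorial $(2r-1)\cdots 3\cdot 1$ is the empty product equal to $1$, the variance factor $[(1-\rho^2)]^0 = 1$, and the expectation $E[(\beta_{\mbox{\scriptsize pa}(i)}^{T} Z_{\mbox{\scriptsize pa}(i)})^{0}] = 1$. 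What remains is $f_y^{(1)}(z_0)\cdot \tfrac{1}{1!}\cdot \beta_i \cdot (f_i^{-1})'(x_i) = \sigma \beta_i (f_i^{-1})'(x_i)$, which is the stated formula.

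There is no real obstacle here; this is a bookkeeping exercise. The only point that deserves a brief sentence is justifying that $f_y$ being affine really does annihilate all $k \geq 2$ terms of (\ref{eq:ce}) (i.e.\ that the infinite sum converges trivially and the interchange of differentiation and summation used implicitly in Theorem \ref{thm:npnCE} is harmless in this degenerate case). After that, the result follows by inspection, and $z_0$ drops out as promised by the ``for every $z_0 \in \mathbb{R}$'' clause in the theorem.
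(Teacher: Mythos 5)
Your proof is correct and follows exactly the route the paper intends: the paper's own justification is the single remark that $f_y(z)=\mu+\sigma z$, from which the collapse of the Taylor series in Theorem~\ref{thm:npnCE} to the $k=1$ term (with $r=0$, $s=1$) gives $\sigma\beta_i(f_i^{-1})'(x_i)$. Your version just makes the bookkeeping explicit; nothing further is needed.
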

The result simply follows from $f_y(Y)=\mu+\sigma Z$ for $Z$ standard normal. This special case both inspires an estimator for the causal effect and gives some hope for obtaining some consistency results. 

\section{NCE:  nonparanormal causal effect estimator}
\label{sec:NCE}

In this section, we propose a simple estimator for the causal effect that is able  to capture non-linear  effects for a wide ranging collection of distributions. Furthermore, we show that under some conditions, this estimator is consistent.

%

\subsection{First order estimator}
\label{sec:First order estimator}
\begin{figure}[tb]
   \centering 
     \hspace{2cm}\includegraphics[scale=1.1,angle=0]{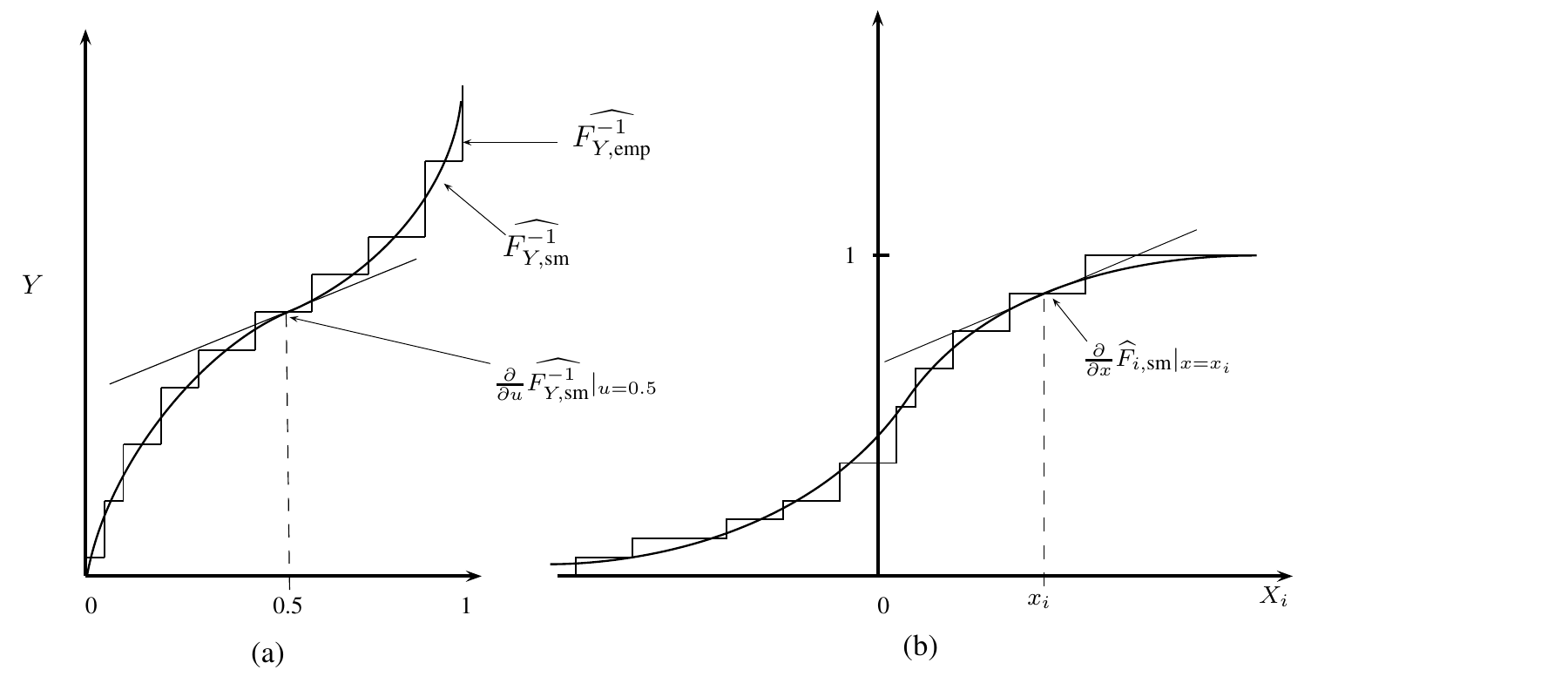}
    
\caption{(a) the derivative of monotone increasing spline $ \widehat{F_{Y,\mbox{\scriptsize sm}}^{-1}}$ for estimate $\frac{\partial}{\partial x} F_Y^{-1}$. (b) the derivative of the monotone increasing estimating spline $\widehat{F}_{i,\mbox{\scriptsize sm}}$ for estimate $\frac{\partial}{\partial x} F_{i}$.   } 
  \label{fig:newplot}
\end{figure}

In the special case of the general causal effect theorem, we derived a one term expression that can be used as inspiration for a first order Taylor estimator of the general causal effect of $X_i=x$ on $Y$, i.e.,
\begin{eqnarray}
\widehat{\mbox{NCE}}_{z_0}(x) &= & \hat{f}_y^{\prime}(z_0)~{\hat{\beta}}_i ~(\widehat{{f}_i^{-1}})^ {\prime}(x),
\label{eqn:Causal effect for nonparanormal linear} 
\end{eqnarray}
for some $z_0, x\in \mathbb{R}$ and where $\hat{\beta} _i $ is the linear regression  coefficient of $\widehat{f_y^{-1}}(Y)$ on $\widehat{f_i^{-1}}(X_i)$, while controlling for the parents $\widehat{f_{\mbox{\scriptsize pa}(i)}^{-1}}({X_{\mbox{\scriptsize pa}(i)}})$ of $i$. In order to obtain consistency, we trim the data for each variable below its $\alpha/p$ and above  $1-\alpha/p$ quantiles, where $p$ is the number of random variables $(X,Y)$.  When an observation has been trimmed for one variable, it is removed in its entirety for all variables. This means that in the worst case scenario, $1-2\alpha$ of the observations remain. In practice, we will often use $\alpha=0.05$. 

We can simplify  expression  (\ref{eqn:Causal effect for nonparanormal linear}) by considering the case that $z_0=0$. Note that it is straightforward to obtain  
\begin{eqnarray}
\  f^{'}_y(0)
&=&\frac{\partial}{\partial u} F_Y^{-1}(u)|_{u=0.5}\ \phi(0)\nonumber
\label{eqn:f prime in ziro} \\
({f}_i^{-1})^ {\prime}(x)
&=&\left[\phi( f_i^{-1}(x))\right]^{-1}\ \frac{\partial}{\partial x} F_i(x)\nonumber
\label{eqn:f prime in ziro} 
\end{eqnarray}
where $\phi$ is the density function of a standard normal distribution. Considering  Figure \ref{fig:newplot}, $F_Y^{-1}$ will be estimated via a monotone increasing smoother $ \widehat{F_{Y,\mbox{\scriptsize sm}}^{-1}}$, which gives us direct access to its derivative. Similarly, $\frac{\partial}{\partial x} F_{i}$ will be estimated by taking the derivative of the monotone increasing estimating smoother $\widehat{F}_{i,\mbox{\scriptsize sm}}$. In particularly, we will make use of kernel smoothers, as explained in the next section in order to prove consistency. Finally, $f_i^{-1}(x)$ will be estimated as $\hat{z} = \Phi^{-1}(\hat{F}_{i,\mbox{\scriptsize sm}}(x))$. Putting this together, we obtain a simplified and 
%
explicit estimator of a non-paranormal causal effect, 
\begin{eqnarray}
\widehat{\mbox{NCE}}_0(x) &= &  \hat{\beta}_i ~ \frac{\phi(0)}{\phi(\hat{z})} ~\frac{\partial \widehat{F_{Y,\mbox{\scriptsize sm}}^{-1}}}{\partial u}(0.5) ~ \frac{\partial \widehat{F}_{i,\mbox{\scriptsize sm}}}{\partial x}(x) .
\label{eq:ce-est} 
\end{eqnarray}
In the following section, we will show that under certain conditions the above estimator is consistent.


\subsection{Consistency}
\label{sec:Consistency}
In this section we will be concerned with the asymptotic
behaviour of our estimator in (\ref{eq:ce-est})  under the assumption of normality of $Y$. 
We first show that the random, but not necessarily independent, sampling scheme of $(X_1,\ldots,X_{p-1})\sim NPN(0,\Sigma,f)$ and $Y\sim N(\mu,\sigma^2)$ combined with our lower and upper $\alpha/p$ trimming scheme will eventually fill up the $p$-dimensional cube $[L_\alpha,U_\alpha]$, where $L_\alpha = (L_\alpha^1,\ldots,L_\alpha^{p-1},L_\alpha^y)$ and  $U_\alpha= (U_\alpha^1,\ldots,U_\alpha^{p-1},U_\alpha^y)$ are the lower and upper quantiles, respectively, for each of the variables $(X_1,\ldots,X_{p-1},Y)$. From the original sample size $n$ approximately $(1-2\alpha)n$ will fall in this cube. Then we show that the kernel estimators of the functions used in the NCE estimators and their derivatives converge fast to their true values in probability. Together with the fact that products of consistent estimators are consistent, this proves the consistency of the  estimator $\widehat{\mbox{NCE}}_0(x)$.

\begin{proposition}
\label{cndkernel1} 
Consider any absolutely continuous random variable $X$ with lower $\alpha$ quantile $L_\alpha$ and upper $\alpha$ quantile $U_\alpha$. For the $N\asymp (1-2\alpha)n$ ordered observations of $X$ in the finite interval $[L_\alpha, U_\alpha]$, the following property holds
\[
\max_{2\leq i\leq N} |X_{(i)}-X_{(i-1)} | = O_P(1/N). 
\]
\end{proposition}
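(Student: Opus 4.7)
The plan is to reduce to uniform spacings via the probability integral transform and then apply the classical Rényi (exponential) representation for uniform order statistics. Let $F$ denote the CDF of $X$ and set $U_i = F(X_i)$; the $U_i$ are iid $U[0,1]$, and the trimming event $\{X\in[L_\alpha,U_\alpha]\}$ coincides exactly with $\{U\in[\alpha,1-\alpha]\}$. Because $X$ is absolutely continuous and $[L_\alpha,U_\alpha]$ is compact, the density $f=F'$ may be taken bounded below by some $c>0$ on this interval (a mild regularity assumption consistent with the application), so $F^{-1}$ is Lipschitz on $[\alpha,1-\alpha]$ with constant $c^{-1}$. Consequently
\[
\max_{2\leq i\leq N}|X_{(i)}-X_{(i-1)}|\;\leq\; c^{-1}\max_{2\leq i\leq N}|U_{(i)}-U_{(i-1)}|,
\]
which reduces the claim to controlling the maximum spacing of $N$ uniform order statistics on $[\alpha,1-\alpha]$.

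Next, the $N+1$ spacings of these uniforms (including the two boundary gaps) are jointly distributed as $(1-2\alpha)(E_1,\ldots,E_{N+1})/S$, where $E_j\stackrel{\text{iid}}{\sim}\mathrm{Exp}(1)$ and $S=\sum_j E_j$. By the strong law $S/N\to 1$ a.s., so
\[
N\cdot\max_{2\leq i\leq N}|U_{(i)}-U_{(i-1)}| \;=\; (1-2\alpha)\max_{j}E_j\cdot\bigl(1+o_P(1)\bigr).
\]
The task thereby reduces to showing that $\max_j E_j$ is $O_P(1)$ in the precise sense needed to extract the stated $1/N$ rate. The natural tool is an exponential tail with a union bound, $P(\max_j E_j>t)\leq (N+1)e^{-t}$, combined with a moment/Markov argument on the conditional spacings.

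The main obstacle is precisely the last conversion. A direct union bound yields only $\max_j E_j = O_P(\log N)$, hence $\max_j|U_{(i)}-U_{(i-1)}|=O_P(\log N/N)$, which falls short of the stated rate by a logarithmic factor. To close this gap one would need to exploit structure beyond independence: for example, that only an asymptotically bounded number of spacings can simultaneously be of order $\log N$, so that after setting those aside the remaining maximum has a genuine $O_P(1/N)$ rate, with the outlier contribution absorbed separately; or a refined second-moment argument on the square of the normalized spacings $ND_j$, which have bounded variance under the Rényi representation. This is the delicate step where the argument must be executed most carefully, as the surrounding reductions (probability integral transform, Rényi representation, Lipschitz inversion) are routine by comparison.
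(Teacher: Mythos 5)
The paper offers no proof of this proposition at all: it is dismissed as ``a simple exercise.'' Your reduction --- probability integral transform, Lipschitz inversion of $F^{-1}$ on the trimmed interval, and the R\'enyi representation of uniform spacings --- is the natural route and is carried out correctly as far as it goes. But the obstacle you identify at the end is not a delicate step awaiting a cleverer argument; it is a genuine impossibility. By the classical limit law for maximal uniform spacings (L\'evy), if $M_N$ denotes the largest of the $N+1$ spacings of $N$ iid uniforms on an interval of length $1-2\alpha$, then $\frac{N}{1-2\alpha}M_N-\log N$ converges in distribution to a Gumbel law, so $N M_N\to\infty$ in probability. Hence $\max_i|U_{(i)}-U_{(i-1)}|$ is of \emph{exact} order $\log N/N$ and is not $O_P(1/N)$; neither the outlier-removal idea nor a second-moment refinement you sketch can remove the logarithm, because it is present in the true limiting behaviour, not an artefact of the union bound. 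The proposition as stated is therefore false under any reading in which the density of $X$ is bounded above and below on $[L_\alpha,U_\alpha]$ (boundedness above forces the spacings of $X$ to be at least a constant times the uniform spacings, giving a matching lower bound of order $\log N/N$). The correct statement is $\max_{2\leq i\leq N}|X_{(i)}-X_{(i-1)}|=O_P(\log N/N)$, which is what your argument actually proves; this weaker rate appears to be harmless for the way the proposition is used downstream, since the kernel bounds in the subsequent proposition already carry $\sqrt{\log n/(nb_n)}$-type terms.

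Two further points worth recording. First, you are right that absolute continuity alone does not yield a density bounded below on $[L_\alpha,U_\alpha]$; without that extra regularity the maximal spacing of the $X_{(i)}$ can be arbitrarily worse than that of the uniforms, so the proposition needs this hypothesis added in any case. Second, a minor bookkeeping issue: $N$ is random (binomial with success probability $1-2\alpha$), so the R\'enyi representation should be applied conditionally on $N$, with the unconditional claim recovered from $N/n\convp 1-2\alpha$; this is routine but should be said.
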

\noindent The symbol $\asymp$  denotes that two sequences of real numbers are asymptotically of the same order. The proof of this Proposition is a simple exercise and will not be given here. 
 
Our goal is first to estimate the function  $ F_{i}$ and its derivative $\frac{\partial}{\partial x} F_{i}$. Similarity, we aim to estimate  $F_{i}^{-1}$ and its derivative.
In order to derive asymptotic properties, we will be using kernel estimators for $\hat{F}_{i,sm}$ and $\widehat{F_{i,sm}^{-1}}(x)$, respectively,
\begin{eqnarray}
\hat F_{i,n}(x)&=&\sum_{j= 2}^N(x_{i(j)}-x_{i(j-1)})\frac{1}{b_n}K\left(\frac{x-x_{i(j)}}{b_n}\right)\left(\alpha+\frac{j-1}{n}\right)\label{eq:kernel1}\\ 
\widehat{F_{i,n}^{-1}}(u)&=&\sum_{j=1}^{N}\frac{1-2\alpha}{N}\frac{1}{b_n}K\left(\frac{u-(\alpha+\frac{j(1-2\alpha)}{N})}{b_n}\right)x_{i(j)}
\label{eq:kernel2}
\end{eqnarray}
for $x\in [L_\alpha^i,U_\alpha^i]$ and $u\in [\alpha, 1-\alpha]$, where $K$ is a kernel function, $b_n > 0$ denotes the bandwidth that we take to
depend on the sample size $n$ in such a way that $b_ n \rightarrow 0$ as $n \rightarrow \infty$ and $x_{i(1)}, x_{i(2)}, \ldots, x_{i(N)}$ denote the order statistics of that part that for the $i$ variable that falls within $[L_\alpha^i,U_\alpha^i]$. 
We define an estimator of $\frac{\partial}{\partial x} F_{i}$  by taking the derivative of the kernel smoother $\widehat{\frac{\partial}{\partial x}  F_{i,n}} = {\frac{\partial}{\partial x} \hat F_{i,n}}=\hat F'_{i,n}$.

\begin{proposition}
\label{priestleyprop} 
If the kernel $K$ is symmetric and twice
continuously differentiable with support in $[-1,1],$ and if
it satisfies the integrability conditions (a) $\int_{-1}^1 K(u)\,\mathrm{d}u=1$
and (b) $\int_{-1}^1 u^{\ell}K(u)\,\mathrm{d}u=0$ for $\ell=1,\ldots,\gamma-1$, then 
for a fixed number $\delta,$ such that $\alpha<\delta<1/2:$
\item[(i)] If $F$ and $F^{-1}$ are  $\gamma\geq1$ times continuously differentiable
and $b_n\rightarrow0$ as $n\rightarrow\infty,$ then
%
\begin{eqnarray*}
\sup_{x\in[L_\alpha^i, U_\alpha^i]}|{\hat F_{i,n}}(x)- F_{i}(x)| &=& 
\mathrm{O}_P \Biggl(
b_n^{\gamma}+\frac{1}{nb_n^2}+\sqrt{\frac{\log n}{nb_n}}
 \Biggr).\\
  \sup_{u\in[\delta,1- \delta]}|\widehat{F_{i,n}^{-1}}(u)- F^{-1}_{i}(u)| &=&
\mathrm{O}_P \Biggl(
b_n^{\gamma}+\frac{1}{nb_n^2}+\sqrt{\frac{\log n}{nb_n}}
 \Biggr).  
\end{eqnarray*}

\item[(ii)] If $F$ and $F^{-1}$ are $\gamma\geq2$ times continuously differentiable and
$b_n\rightarrow0$ as $n\rightarrow\infty,$
then

%
\begin{eqnarray*}
\label{muprimep}
\sup_{x\in[L_\alpha^i, U_\alpha^i]}|{\hat F^{\prime}_{i,n}}(x)- F'_{i}(x)|
&=& \mathrm{O}_P \Biggl(
b_n^{\gamma-1}+\frac{1}{nb_n^3}+\sqrt{\frac{\log
n}{nb_n^3}} \Biggr).\\
\sup_{u\in[\delta,1-\delta]}|\widehat{F^{-1}_{i,n}}^{\prime}(u)- {F^{-1}_{i}}^{\prime}(u)|
&=& \mathrm{O}_P \Biggl(
b_n^{\gamma-1}+\frac{1}{nb_n^3}+\sqrt{\frac{\log
n}{nb_n^3}} \Biggr).
\end{eqnarray*}
In particular, $\hat F_{i,n}(x)$ and $\hat F_{i,n}^{\prime}(x)$ are consistent on  $[L_\alpha^i, U_\alpha^i]$ and 
$\widehat{F_{i,n}^{-1}}(x)$ and $\widehat{F^{-1}_{i,n}}^{\prime}(x)$
are consistent on $[\delta,1-\delta],$ if $nb_n^3/\log
n\rightarrow\infty$ holds additionally.
\end{proposition}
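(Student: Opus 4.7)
The plan is to decompose each of the four sup-norm errors into three pieces corresponding exactly to the three terms in the stated rate: a kernel bias, a Riemann-sum discretization error, and a stochastic empirical-process fluctuation. Begin with $\hat F_{i,n}(x)$. Observe that the sum in \eqref{eq:kernel1} is a Riemann approximation (over the random partition given by the order statistics) of the convolution
\begin{equation*}
\tilde F_{i,n}(x) \;=\; \int \frac{1}{b_n} K\!\left(\frac{x-t}{b_n}\right) \hat F_n(t)\, dt,
\end{equation*}
where $\hat F_n$ is the (trimmed) empirical c.d.f. I would then write
\begin{equation*}
\hat F_{i,n}(x)-F_i(x) \;=\; \bigl[\hat F_{i,n}(x)-\tilde F_{i,n}(x)\bigr] \;+\; \bigl[\tilde F_{i,n}(x)-\mathbb{E}\tilde F_{i,n}(x)\bigr] \;+\; \bigl[\mathbb{E}\tilde F_{i,n}(x)-F_i(x)\bigr].
\end{equation*}

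For the first bracket, the integrand $t\mapsto K((x-t)/b_n)/b_n$ has derivative of order $1/b_n^{2}$, and by Proposition \ref{cndkernel1} the maximal spacing between consecutive order statistics is $O_P(1/N)=O_P(1/n)$; a mean-value-type bound uniform in $x$ then yields the $O_P(1/(nb_n^{2}))$ term. For the third bracket, the change of variables $u=(x-t)/b_n$ gives $\mathbb{E}\tilde F_{i,n}(x)=\int K(u) F_i(x-ub_n)\, du$; Taylor-expanding $F_i$ to order $\gamma$ and invoking the vanishing-moment conditions on $K$ yields a remainder of order $O(b_n^{\gamma})$ uniformly in $x\in[L_\alpha^i,U_\alpha^i]$. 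For the middle bracket, I would apply a standard uniform kernel-smoothed empirical process bound (Einmahl--Mason / Talagrand), which produces the $\sqrt{\log n/(nb_n)}$ rate. Adding the three parts gives the first line of (i). The statement for $\widehat{F_{i,n}^{-1}}$ is proved by the same decomposition, with the roles of the empirical c.d.f. and empirical quantile interchanged; here the partition in the $u$-variable has deterministic spacing $(1-2\alpha)/N$, so the Riemann term is in fact easier, and the bias/stochastic analysis uses smoothness of $F_i^{-1}$ on $[\delta,1-\delta]$, which avoids boundary singularities because $\delta>\alpha$.

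Part (ii) is handled by differentiating the same decomposition in $x$. Each $x$-derivative passes onto $K$, so the integrand scales are worsened by an extra factor of $1/b_n$: the bias loses one order of $K$'s vanishing moments (giving $b_n^{\gamma-1}$), the discretization term inherits an extra $1/b_n$ (giving $1/(nb_n^{3})$), and the stochastic term picks up an extra $1/b_n$ under the square root (giving $\sqrt{\log n/(nb_n^{3})}$). The twice-continuous differentiability assumption on $K$ ensures that $K'$ is still of bounded variation with a square-integrable envelope, so the empirical-process bound remains applicable. The consistency conclusion then follows because, under $nb_n^{3}/\log n\to\infty$ and $b_n\to 0$, all three contributions vanish.

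The main obstacle will be the Riemann-sum term $\hat F_{i,n}-\tilde F_{i,n}$, and in particular obtaining a bound that is simultaneously uniform in $x$ and correctly matched to the order statistics. A naive Lipschitz-on-the-partition argument picks up a factor $\|K'\|_\infty/b_n^{2}$ times the maximal spacing, but one must also account for the fact that $\hat F_n$ in \eqref{eq:kernel1} is evaluated at $x_{i(j)}$ with value $\alpha+(j-1)/n$ rather than $j/n$, which introduces a deterministic shift of size $1/n$ that has to be absorbed into the stated rates. A clean way to handle both subtleties at once is to integrate by parts in $\tilde F_{i,n}$, transferring the $x$-derivative of the kernel onto $\hat F_n-F_i$, and then invoke Proposition \ref{cndkernel1} together with a DKW-type bound; this is the only part of the argument that is not essentially bookkeeping.
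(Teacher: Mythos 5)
The paper does not actually prove this proposition: its entire ``proof'' is a citation to \citet[Proposition 3.1]{gugushvili2012}. Your three-term decomposition --- kernel bias of order $b_n^{\gamma}$ from the vanishing-moment conditions, a Riemann-sum discretization error of order $1/(nb_n^{2})$ controlled by the maximal spacing of the order statistics, and a centred stochastic fluctuation --- reconstructs exactly the structure of the argument in that reference, with the three error sources matching the three terms in the stated rates, and the $1/b_n$ bookkeeping for part (ii) is right. So the skeleton is sound and is, in substance, the intended proof.

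Two places where the sketch is thinner than it should be. First, the boundary: for $x$ within $b_n$ of $L_\alpha^i$ or $U_\alpha^i$ the kernel window protrudes outside the data range, the sum in (\ref{eq:kernel1}) covers only part of the support of $K$, and the bias is $O(1)$ rather than $O(b_n^{\gamma})$; your Taylor expansion of $\mathbb{E}\tilde F_{i,n}(x)=\int K(u)F_i(x-ub_n)\,\mathrm{d}u$ is valid only on an interior subinterval (or after a boundary-kernel correction). You correctly observe that the quantile statements dodge this because $\delta>\alpha$, but the first displays of (i) and (ii) take the supremum over all of $[L_\alpha^i,U_\alpha^i]$, so the supremum must be restricted or the boundary treated explicitly --- a defect inherited from the proposition as stated rather than introduced by you. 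Second, Proposition \ref{cndkernel1}, on which your discretization bound rests, is itself too optimistic: the maximal spacing of $N$ order statistics from a density bounded away from zero on a compact interval is $O_P(\log N/N)$, not $O_P(1/N)$; this inflates the middle term to $\log n/(nb_n^{2})$ (resp.\ $\log n/(nb_n^{3})$) but does not affect the consistency conclusion. Finally, a simplification: the centred term $\tilde F_{i,n}(x)-\mathbb{E}\tilde F_{i,n}(x)=\int b_n^{-1}K((x-t)/b_n)\bigl(\hat F_n(t)-\mathbb{E}\hat F_n(t)\bigr)\,\mathrm{d}t$ is bounded by $2\|K\|_\infty\sup_t|\hat F_n(t)-\mathbb{E}\hat F_n(t)|=O_P(n^{-1/2})$ directly from Dvoretzky--Kiefer--Wolfowitz (and by $O_P(1/(b_n\sqrt{n}))$ for the derivative), both within the stated rates; since you are smoothing a bounded monotone function rather than a density, no Einmahl--Mason or Talagrand machinery is needed.
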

The proof is given in \citet[Proposition 3.1]{gugushvili2012}. The estimator $\widehat{\mbox{NCE}}_0(x)$ in (\ref{eq:ce-est}) contains four terms. Based on Proposition \ref{priestleyprop} we showed the consistency of two terms, $\hat F_{i,n}^{\prime}(x)$ and $\widehat{F^{-1}_{i,n}}^{\prime}(x)$. As any continuous function of a consistent estimator is consistent \citep{lehmann1999elements}, also $\hat{z} = \Phi^{-1}(\hat{F}_{i,\mbox{\scriptsize n}}(x))$  is consistent. In order to proof consistency of $\widehat{\mbox{NCE}}_0(x)$ we still  need to show that $\hat{\beta}_i$ is consistent, where $\hat{\beta} _i $ is the linear regression  coefficient of $\widehat{f_y^{-1}}(Y)$ on $\widehat{f_i^{-1}}(X_i)$, while controlling for the parents $\widehat{f_{\mbox{\scriptsize pa}(i)}^{-1}}({X_{\mbox{\scriptsize pa}(i)}})$ of $i$. In the following Proposition we show consistency of $\hat{\beta}_i$. 

\begin{proposition}
\label{betaasy}
Let  
$\hat{\beta} _i $ be the linear regression  coefficient of $\widehat{f_y^{-1}}(Y)$ on $\widehat{f_i^{-1}}(X_i)$, while controlling for the parents $\widehat{f_{\mbox{\scriptsize pa}(i)}^{-1}}({X_{\mbox{\scriptsize pa}(i)}})$ of $i$, then
\begin{eqnarray}
\label{beta}
\hat\beta_i^n \stackrel{P}{\longrightarrow} \beta_i,
\end{eqnarray}
where $\beta_i$ is the true regression coefficient as defined in Theorem \ref{thm:npnCE}. 
\end{proposition}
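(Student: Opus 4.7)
The plan is to decompose $\hat\beta_i$ into an oracle OLS estimator based on the true latent normals plus a plug-in perturbation, and to control each piece separately. Write $Z_j = f_j^{-1}(X_j)$ and $Z = f_y^{-1}(Y)$, so that $(Z_i, Z_{\mbox{\scriptsize pa}(i)}, Z)$ is jointly Gaussian with covariance $\Sigma$ restricted to the corresponding coordinates. By construction of $(\beta_i, \beta_{\mbox{\scriptsize pa}(i)})$ in Theorem \ref{thm:npnCE}, these are exactly the population coefficients of the linear projection of $Z$ onto $(Z_i, Z_{\mbox{\scriptsize pa}(i)})$. Hence the \emph{oracle} OLS estimator $\tilde\beta_i^n$, computed from an i.i.d.\ sample of the true $Z$'s, satisfies $\tilde\beta_i^n \convp \beta_i$ by standard consistency of OLS in the Gaussian linear model together with the continuous mapping theorem applied to the sample covariance matrix.

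The remaining task is to show $\hat\beta_i^n - \tilde\beta_i^n \convp 0$. Expressing both estimators in closed form as smooth functions (via $A \mapsto A^{-1} b$) of the empirical second moments of $(\widehat{f_i^{-1}}(X_i), \widehat{f_{\mbox{\scriptsize pa}(i)}^{-1}}(X_{\mbox{\scriptsize pa}(i)}), \widehat{f_y^{-1}}(Y))$ versus the corresponding moments of the true latent normals, it suffices to show that each entry of the empirical covariance matrix built from the plug-ins converges in probability to the corresponding true-$Z$ entry. I would split this into two pieces: the oracle sample moment converges to the population moment by the law of large numbers on Gaussian variables, and the difference between plug-in and oracle sample moments is bounded in absolute value by terms of the form
\begin{equation*}
\frac{1}{n}\sum_{m=1}^{n} \bigl|\widehat{f_j^{-1}}(X_j^{(m)}) - f_j^{-1}(X_j^{(m)})\bigr|\cdot |U_m| + \text{symmetric terms},
\end{equation*}
where $U_m$ is bounded after trimming. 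On the trimmed cube $[L_\alpha, U_\alpha]$, the plug-in is $\widehat{f_j^{-1}}(x) = \Phi^{-1}(\hat F_{j,n}(x))$, and Proposition \ref{priestleyprop}(i) yields a uniform rate for $\hat F_{j,n} - F_j$ on $[L_\alpha^j, U_\alpha^j]$; since $\Phi^{-1}$ is Lipschitz on the compact interval $[\alpha, 1-\alpha]$ in which $\hat F_{j,n}$ eventually lies with high probability (by the uniform consistency), this transfers to a uniform rate $\sup_{x\in [L_\alpha^j, U_\alpha^j]} |\widehat{f_j^{-1}}(x) - f_j^{-1}(x)| \convp 0$. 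The same holds for $\widehat{f_y^{-1}}$ under the normality of $Y$. Combined with the almost sure boundedness of the trimmed variables, this drives the displayed bound to zero in probability.

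Finally, Slutsky together with continuous mapping applied to the inverse-covariance representation of the OLS coefficient yields $\hat\beta_i^n \convp \beta_i$. The main obstacle is the plug-in step: the kernel-smoothed $\widehat{f_j^{-1}}$ enters nonlinearly through $\Phi^{-1}$, and without trimming the derivative of $\Phi^{-1}$ blows up at $0$ and $1$, so one cannot directly apply a uniform Lipschitz argument. The $\alpha/p$ trimming introduced in Section \ref{sec:First order estimator} is precisely what keeps the arguments of $\Phi^{-1}$ bounded away from the endpoints, rendering the propagation of the uniform rate from Proposition \ref{priestleyprop} into the OLS statistic routine.
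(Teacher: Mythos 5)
Your proposal is correct and follows essentially the same route as the paper's proof: decompose $\hat\beta_i^n$ into an oracle OLS estimator built from the true latent normals (consistent by standard Gaussian regression theory) plus a plug-in perturbation, and kill the perturbation by showing the plug-in empirical second moments converge to the oracle ones. If anything, you supply more detail than the paper on the one step it leaves implicit --- namely why $\hat z = \Phi^{-1}(\hat F_{j,n}(x))$ is uniformly consistent, which you correctly trace to Proposition \ref{priestleyprop}(i) combined with the Lipschitz property of $\Phi^{-1}$ on the trimmed range.
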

\begin{proof}
Define $$ \hat Z_{n} = 
 \begin{pmatrix}
  \hat z_{1,i} & \hat z_{1,\mbox{\scriptsize pa}(i)_1} & \cdots & \hat z_{1,\mbox{\scriptsize pa}(i)_k} \\
  \hat z_{2,i} & \hat z_{2,\mbox{\scriptsize pa}(i)_1} & \cdots & \hat z_{2,\mbox{\scriptsize pa}(i)_k} \\
  \vdots  & \vdots  & \ddots & \vdots  \\
  \hat z_{N,i} & \hat z_{N,\mbox{\scriptsize pa}(i)_1} & \cdots & \hat z_{N,\mbox{\scriptsize pa}(i)_k} 
 \end{pmatrix},$$
such that $\hat{z}_{j,l} = \Phi^{-1}(\hat{F}_{l,\mbox{\scriptsize n}}(x_{jl}))$ where $x_{jl}$ is the non-ordered $j$th sample of variable $l$ and $\mbox{pa}(i)$ is the index set of $k$ parents of $i$. Let
$ \hat \Upsilon_{n}^{T} =\left(
  \Phi^{-1}(\hat{F}_{y,\mbox{\scriptsize n}}(y_1)), \Phi^{-1}(\hat{F}_{y,\mbox{\scriptsize n}}(y_2)), \cdots, \Phi^{-1}(\hat{F}_{y,\mbox{\scriptsize n}}(y_N))\right)$. The coefficient $\hat{\beta}_i^n$ is defined as the first element of the vector,
$$\hat{\beta}^n=(\hat Z_n^t\hat Z_n)^{-1}\hat Z_n^t\hat \Upsilon_n.$$ 
We can also define the oracle estimator $\hat{B}_i^n$ as the first element of
 $$\hat{B}^n=( Z_n^t Z_n)^{-1} Z^t \Upsilon_n,$$ 
where $Z_n$  and $\Upsilon_n$ are obtained by replacing the marginal $\hat F$s by the true $F$s. Consider an arbitrary $\epsilon,\delta>0$, 
\begin{eqnarray}
P(|\hat \beta_i^n- \beta_i|>\epsilon)&=&P(|\hat \beta_i^n-\hat{B}_i^n+\hat{B}_i^n- \beta_i|>\epsilon) \nonumber\\
&\leq&P((|\hat \beta_i^n-\hat{B}_i^n|+|\hat{B}_i^n- \beta_i|)>\epsilon) \nonumber \\
&\leq&P((|\hat \beta_i^n-\hat{B}_i^n|> \epsilon/2) +  P(|\hat{B}_i^n- \beta_i|)>\epsilon/2) \label{eq:RHS0}
\end{eqnarray}
We first consider the first right hand side term of (\ref{eq:RHS0}). Let's define $\hat A_n=\frac{\hat Z_n^t\hat Z_n}{n}$,  $ {A_n}= \frac{ Z_n^t Z_n}{n}$ and $\hat b_n=\frac{\hat Z_n^t\hat \Upsilon_n}{n}$ and $ {b}_n= \frac{ Z_n^t \Upsilon_n}{n}$. Then,
\begin{eqnarray}
P(|\hat \beta_i^n-\hat{B}_i^n|>\frac{\epsilon}{2})&\leq&P(\parallel\hat A_n^{-1}\hat b_n-{A}_n^{-1}{b}_n\parallel ^2>\frac{\epsilon}{2}) \nonumber \\
&\leq& 
P(\parallel\hat A_n^{-1}(\hat b_n-{b}_n)\parallel ^2+\parallel(\hat A_n^{-1}-{A}_n^{-1}){b}_n\parallel ^2>\frac{\epsilon}{2})\nonumber\\
&\leq&P(\parallel\hat A_n^{-1}(\hat b_n-{b}_n)\parallel ^2> \frac{\epsilon}{4})+P(\parallel(\hat A_n^{-1}-{A}_n^{-1}){b}_n\parallel ^2>\frac{\epsilon}{4}) \label{eq:RHS1}.
\end{eqnarray}
By the consistency of $\hat{z}$, we have that both $\hat{b}_n$ and $b_n$ converge in probability to some $b=\Sigma_{(i,\mbox{\scriptsize pa}(i)),p}$ and both $\hat{A}^{-1}_n$ and $A_n^{-1}$ converge in probability to some $A^{-1}=\Sigma_{(i,\mbox{\scriptsize pa}(i)),(i,\mbox{\scriptsize pa}(i))}^{-1}$, where $\Sigma$ is defined in the body of Theorem \ref{thm:npnCE}. Therefore, there is a $n^\ast$, such that for all $n\geq n^\ast$, both terms on the right hand side of (\ref{eq:RHS1}) are less than $\delta/4$.  So for all $n\geq n^\ast$,
 \[ P(|\hat \beta_i^n-\hat{B}_i^n|>\frac{\epsilon}{2}) <  \frac{\delta}{2}. \]
For the second term of the right hand side of (\ref{eq:RHS0}), it is sufficient to use the fact that in the latent normal space a regression estimate is consistent and therefore, there exist a $n^\perp$, such that any $n>n^\perp$, 
$$P(|\hat{B}_i^n- \beta_i|>\epsilon/2)<\delta/2.$$
Putting both results together, we now have that for any $n\geq \max\{n^\ast,n^\perp\}$, 
$$P(|\hat \beta_i^n- \beta_i|>\epsilon)<{\delta}.$$
Thus we get the desired result.
\end{proof}


The following Proposition  provides a result that our estimator in (\ref{eq:ce-est}) is consistent. 
\begin{proposition}
\label{consistency} Consider the estimator of NCE$_0(x)$ in (\ref{eq:ce-est}), for which we consider the component estimators (\ref{eq:kernel1}),  (\ref{eq:kernel2}) and (\ref{beta}). For the kernel estimators, we assume that the conditions of Proposition \ref{priestleyprop} are satisfied and, furthermore, the bandwidth $b_n\rightarrow0$, but not too fast so that $nb_n^3/\log n\rightarrow\infty$. We have $$\widehat{\mbox{NCE}}_{0,n}\convp\mbox{NCE}_{0}.$$
\end{proposition}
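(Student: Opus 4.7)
The plan is to write $\widehat{\mbox{NCE}}_{0,n}(x)$ as a product of four estimated factors and show that each factor converges in probability to the corresponding factor of $\mbox{NCE}_0(x) = \sigma \beta_i (f_i^{-1})'(x)$; the conclusion then follows from the continuous mapping theorem applied to the multiplication map on $\mathbb{R}^4$ (equivalently, from repeated application of Slutsky's theorem). The matching between the product of the four limits and $\mbox{NCE}_0(x)$ is obtained from the derivative identities of Section~\ref{sec:First order estimator}: in the normal case $f_y(z) = \mu+\sigma z$, so $(F_Y^{-1})'(0.5)\,\phi(0) = f_y'(0) = \sigma$, while $(f_i^{-1})'(x) = F_i'(x)/\phi(z)$ with $z=\Phi^{-1}(F_i(x))$. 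Thus the target simplifies to $\beta_i\cdot(\phi(0)/\phi(z))\cdot(F_Y^{-1})'(0.5)\cdot F_i'(x)$, which is exactly the population object to which the product of the four estimated factors should converge.

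The four convergences are assembled as follows. First, $\hat\beta_i^n \convp \beta_i$ is exactly Proposition~\ref{betaasy}. Second, Proposition~\ref{priestleyprop}(ii), applied to $Y$ at $u=0.5 \in [\delta,1-\delta]$ and to $X_i$ at the interior point $x\in[L_\alpha^i,U_\alpha^i]$, yields under the bandwidth condition $nb_n^3/\log n \to \infty$,
\[
\widehat{F^{-1}_{Y,n}}^{\prime}(0.5) \convp (F_Y^{-1})'(0.5), \qquad \hat F^{\prime}_{i,n}(x) \convp F_i'(x).
\]
Third, by Proposition~\ref{priestleyprop}(i), $\hat F_{i,n}(x)\convp F_i(x)$; since $\Phi^{-1}$ is continuous on $(0,1)$ and $\phi$ is continuous on $\mathbb{R}$, the continuous mapping theorem gives $\phi(\hat z_n) = \phi(\Phi^{-1}(\hat F_{i,n}(x))) \convp \phi(z)$, and $\phi(z)>0$ makes the reciprocal continuous at the limit.

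Combining these four convergences via Slutsky's theorem produces the desired $\widehat{\mbox{NCE}}_{0,n}(x)\convp\mbox{NCE}_0(x)$. The main technical point I expect to have to justify carefully is that the evaluation point $x$ lies in the asymptotic trimming interval $[L_\alpha^i,U_\alpha^i]$, so that the \emph{supremum} bound in Proposition~\ref{priestleyprop} delivers a pointwise rate at $x$; the trimming scheme introduced in Section~\ref{sec:First order estimator} is designed precisely to make this hold with probability tending to one, and the analogous requirement at $u=0.5$ is immediate for any $\alpha<1/2$. Everything else --- continuity of the elementary maps involved, positivity of $\phi(z)$, and the fact that the kernel-based estimators and $\hat\beta_i^n$ all converge simultaneously under the stated bandwidth regime --- is routine, so no substantive obstacle remains beyond verifying the interior evaluation conditions and invoking Slutsky on the four independent-looking convergences.
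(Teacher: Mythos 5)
Your proposal is correct and follows essentially the same route as the paper's own proof, which simply decomposes $\widehat{\mbox{NCE}}_{0,n}(x)$ into its factors, invokes the consistency of each (Propositions \ref{priestleyprop} and \ref{betaasy}, plus continuity for $\phi(\hat z)$), and concludes via the fact that products of consistent estimators are consistent. Your write-up is in fact more explicit than the paper's --- in particular in matching the product of the four limits to $\sigma\beta_i(f_i^{-1})'(x)$ and in flagging the interior-evaluation requirement --- but the underlying argument is the same.
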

\begin{proof}
For two sequences of random variables $Z_n$ and $W_n$ and two random variables $Z, W$, such that  $Z_n$ converges in probability to $Z$ and $W_n$ converges in probability to $W$, then it is a standard results that $Z_nW_n$ converges in probability to $ZW$ \citep{lehmann1999elements}. As all the components of $\widehat{\mbox{NCE}}_0(x)$ have been shown to be consistent, then the estimator is consistent. 
\end{proof}

\section{Simulation studies}
\label{sec:simulation-methods}
In this section, we test our estimation method for two different types of distributions, to wit, Gaussian and nonparanormal with exponential margins. For Gaussian data, the 
method should find constant causal effects and can be compared directly with the IDA method \citep{maathuis2009estimating}. We consider two scenarios: (i) in which the underlying causal graph is known and (ii) where it is unknown and needs to be estimated via the RPC-algorithm. In the latter case, the IDA method has some additional advantages of being able to use the somewhat more powerful PC-algorithm. 
For the nonparanormal simulation with exponential margins, calculating the explicit causal effect is very involved in general. Therefore we apply the method to a network with two nodes for which the true causal effect can be evaluated numerically.
%
%

\subsection{Gaussian data}
\label{sec:simulation-results}

Following  \cite{kalisch2007estimating}, we simulate random DAGs and sample from probability distributions faithful to them. For convenience, we fix an increasing ordering of the variables $\{X_1, . . . , X_p\}$, meaning that for  a vector of independent Gaussian variables $\varepsilon = (\varepsilon_1 , . . . , \varepsilon_p )$
\begin{equation}
X= A X+\varepsilon,
\label{eqn:simu matrix}
\end{equation}
where the coefficient matrix $A$  has entries $A_{ij}$ that are zero for $i<j$ and $A_{ji} \neq 0$ if the corresponding DAG has a directed edge from node $i$ to node $j$ for some $i>j$. The DAGs and skeletons thereof
that are created in this way have the following property: $ E[N_i ] = s(p - 1)$, where $N_i$ is the number of neighbours of a node $i$.
With probability one, the vector X solving (\ref{eqn:simu matrix}) is Markov and faithful with respect to G.

We consider two different size graphs: a small graph with ten vertices and a larger graph with  fifty vertices, both with
an expected vertex degree of three. For each $n \in \{100,1000\}$ and each of the two types of graphs, we repeat each simulation 100 times.

\subsubsection{Causal DAG known}
\label{sec:simulation-results}
If we assume that the causal DAG is known, then for estimating the causal effects we apply both our NCE algorithm and the IDA algorithm, described in (\ref{eqn: linear causality for Gaussian}). Given that the IDA algorithm is made for these Gaussian data, the method should outperform the NCE method, which is agnostic about the underlying distributional assumptions. We apply the methods to the four data scenarios and the results are presented in the last column of Table \ref{tab:simu1}. It shows that when the number of observations are increasing, the mean absolute value deviation for causal effect estimates for both IDA and NCE methods are decreasing.  Furthermore, the NCE method, as expected, is more variable. This variation is mostly the result from the poorer estimates of the distributional shape in the tails of the distribution.

\subsubsection{Causal DAG unknown}
\label{sec:simulation-results}
If the underlying causal DAG is considered unknown, then the CPDAG and associated DAGs need to be estimated. 
For each simulation, we run both the standard PC-algorithm and the robust RPC-algorithm on a grid of significance levels $\alpha$ ranging from $10^{-10}$ to 0.5.  
For each estimated DAG,
we compute the causal effects of each node according to the NCE method and the compare the results with the IDA method.

Figures \ref{fig:sim10-2-0-1} show the causal effects between the chosen nodes for small graph on ten vertices, i.e. $p=10$, with $n=100$.  In these figures the red line show the real causal effect between 2  chosen nodes. The blue line shows the average estimated causal effect from the IDA method. The black line show the functional causal effect estimate from (\ref{eq:ce-est}), proposed by our method. The dashed lines express the average standard deviation of our functional causal effect estimate. A clear message emerges from plots: whereas the IDA method is exactly matched for this simulation scenario, our nonparanormal causal effects estimates are quite stable. Moreover, the confidence intervals calculated by our method typically contain the true effect. 
 
In Table \ref{tab:simu1} 
provide numerical comparisons of
both methods on data sets with different transformations, where we repeat the experiments 100
times and report the mean absolute value deviation for causal effect on each pair nodes in both IDA and NCE methods.  

\begin{table}[h]
\caption{   Results of  mean absolute value deviation causal effect for comparison NCE and IDA methods for  small graph ($p=10$) and large graph($p=50$) when the data is Gaussian.}
\label{tab:simu1}
\begin{center} {\footnotesize}
 \small\addtolength{\tabcolsep}{-5pt}
 \begin{tabular}{cccccccccccccc}
  \hline
   &     &     \multicolumn{2}{c}{ $\alpha=0.01$} && &  & \multicolumn{2}{c}{ $\alpha=0.1$} & & & \multicolumn{2}{c}{ DAG Known}& \\
  \cline{3-4}
  \cline{8-9}
  \cline{12-13}  
 &  \textbf{{ \itshape n}} ~~~~~~~ &\textbf{IDA}& \textbf{ NCE} &&  & &  \textbf{IDA} & \textbf{ NCE} && &  \textbf{IDA} & \textbf{ NCE}&\\
\hline
\textbf{{\itshape p}}= 10   \\
 ~~~~~~~~~~~ & 100 ~~~~~~~ &~  0.101  &~ 0.576 & & &  &0.144&~ 0.554&& &  0.118&0.455 \\[1ex]

   &1000   ~~~~~~~ &~ 0.033 &~ 0.385 & & & &0.029 &~0.283&& & 0.031&0.303 \\[1ex]
    
\textbf{{\itshape p}}= 50   \\
 ~~~~~~~~~~~~~ & 100 ~~~~~~~ &~ 3.732  &~ 2.515 & & &  &2.261 &~ 3.759 && & 2.004 & 2.677 \\[1ex]
		
  	&1000 ~~~~~~~   &~ 1.175 &~ 2.100 & ~~~~~~~~& & & 0.964 &~  1.378& &~~~~~~~~~& 0.724& 2.281 \\[2ex]   
 \hline		
\end{tabular}
\end{center}
\end{table}

\begin{figure}[]
   \centering
     \includegraphics[scale=.7,angle=0]{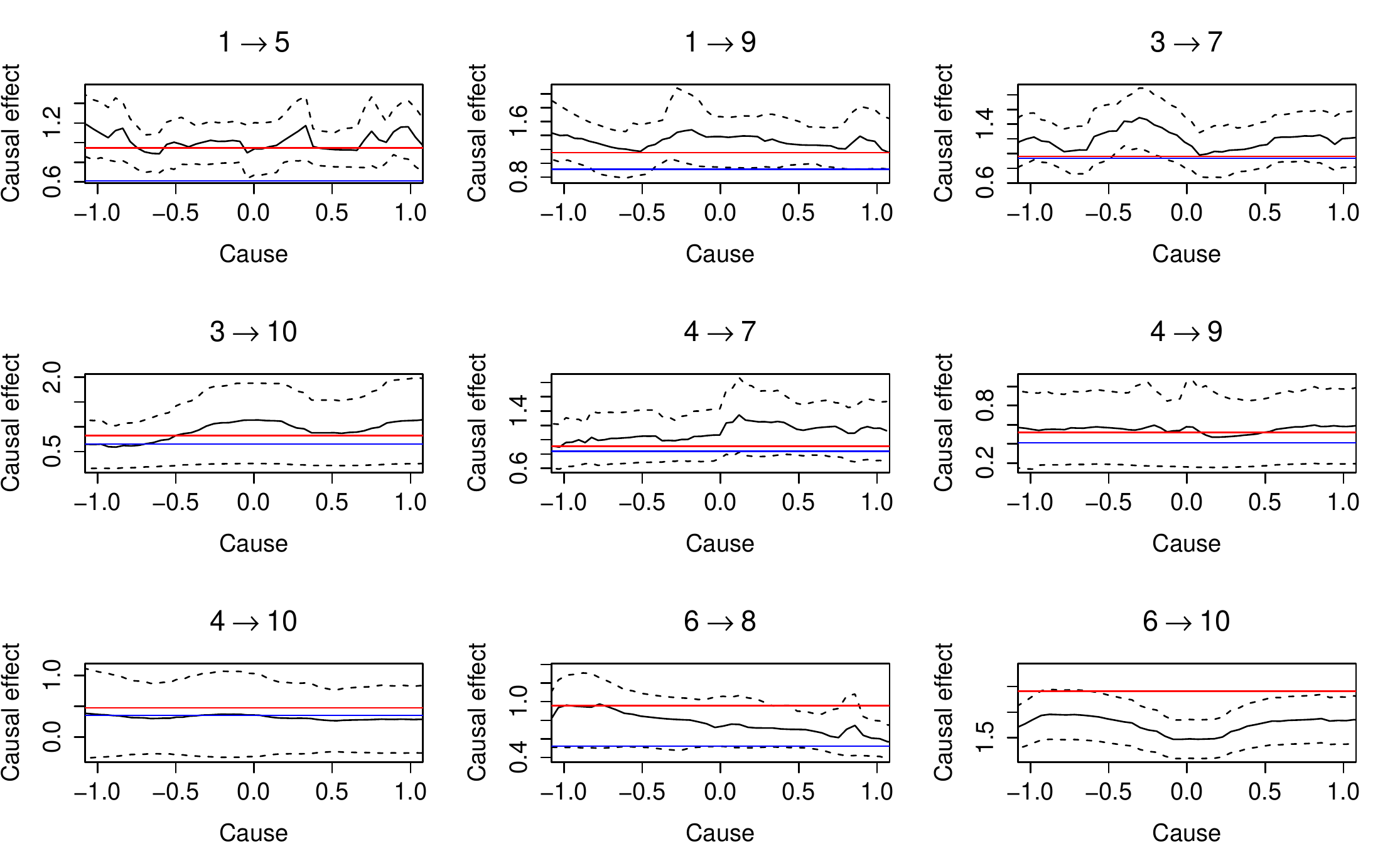}
    
\caption{Simulation study for Gaussian data from a causal graph (ten vertices $p=10$, with $n=100$ observations). The red lines are the true (constant) causal effects. The blue lines are the causal effect estimates from the IDA methods and black lines show the functional causal effect estimates from our NCE method. The dashed lines show the confidence intervals for functional causal effect estimates.  } 
  \label{fig:sim10-2-0-1}
\end{figure}
\subsection{Exponential data}
\label{sec:simulation-results}

Only in a few special non-Gaussian distributional examples can we calculate the causal effects (\ref{eq:ce}) exactly. This is particularly relevant in a simulation study, where we want to show the efficiency of our estimation method. We consider the causal effects in a bivariate exponential distribution. We assume only two nodes with exponential marginal distributions and then apply \cite{crane2008conditional} to find the closed form for conditional expectation formula for Gaussian copula. We derive the causal effect for the bivariate Gaussian copula. 
If we have a bivariate Gaussian copula, with dependence parameter $\rho$, we have
\begin{eqnarray}
 E(Y |X = x) = \int_{\mathbb{R}} y {\frac{\partial}{\partial y}} \Phi(\frac{\Phi^{-1}(F(y))-\rho \Phi^{-1}(G(x))}{\sqrt{1-\rho^ 2}})dy.
 \label{eqn:Gaussian copula}  
\end{eqnarray}
If both marginal distributions $F$ and $G$ were $N(0,1)$, the copula would revert back to the
bivariate normal distribution. The Gaussian copula, however, gives us more flexibility, as it
can accommodate any type of univariate distributions, $F$ and $G$. In (\ref{eqn:Gaussian copula}), we choose two marginal distributions that are exponential with parameter $\lambda_x,\lambda_y >0$. Thus, Equation (\ref{eqn:Gaussian copula}) reduces to
\begin{eqnarray}
 E(Y |X = x)= \frac{1}{\sqrt{1-\rho ^2}}\int_{\mathbb{R}} y \phi(\frac{\Phi^{-1}(1-\exp(\lambda_y y))-\rho \Phi^{-1}(1-\exp(\lambda_x x))}{\sqrt{1-\rho^ 2}}) \frac{{\exp({-\lambda_y y})}}{\phi(\Phi^{-1}(1-\exp(\lambda_y y)))} dy \nonumber 
 \end{eqnarray}
Therefore, for a bivariate nonparanormal with exponential marginals, we obtain the following causal effect,
 \begin{eqnarray}
\mbox{CE}(Y| X = x)= - \frac{\rho}{1-\rho ^2}\int_{\mathbb{R}} y\phi^{'} (t)\frac{{\exp({-\lambda_x x})}{\exp({-\lambda_y y})}}{{\phi(\Phi^{-1}(1-\exp(\lambda_y y)))}{\phi(\Phi^{-1}(1-\exp(\lambda_x x)))}} dy
\label{eqn:Gaussian copula for exp}
\end{eqnarray}
where  $t=\frac{\Phi^{-1}(1-\exp(\lambda_y y))-\rho \Phi^{-1}(1-\exp(\lambda_x x))}{\sqrt{1-\rho^ 2}}$. 


In the simulation study we assume that node $X$ affects node $Y$, in the following fashion,
\begin{eqnarray*} X &=& F^{-1}\left(\Phi(Z_1)\right)\\
 Y&=& F^{-1}\left(\Phi(\frac{Z_1+Z_2}{\sqrt{2}})\right),
 \end{eqnarray*}
where $F$ is the CDF of an Exponential(1) distribution and $Z_1,Z_2 \stackrel{\mbox{\scriptsize i.i.d.}}{\sim} N(0,1)$. This falls under the usual nonparanormal scenario. The explicit expression for the causal effect in Theorem \ref{thm:npnCE} is very involved, but we derived in  (\ref{eqn:Gaussian copula for exp}) a simplified expression. We evaluated this expression numerically to obtain the true causal effect, expressed as the solid black line in  Figure \ref{EXP}. Then we simulated $n=1,000$ observations from the above model for inferring the causal effect.

We assume that the underlying causal graph, $X\longrightarrow Y$, is known and used the NCE method to infer the non-linear causal effect. The blue line  Figure \ref{EXP} shows the functional causal effect estimate from NCE method. It matches very well the true causal effect. Clearly, had IDA been applied in this scenario, it would have come up with a nonsensical constant causal effect.

\begin{figure}[!h]\centering%
     \includegraphics[scale=.4]{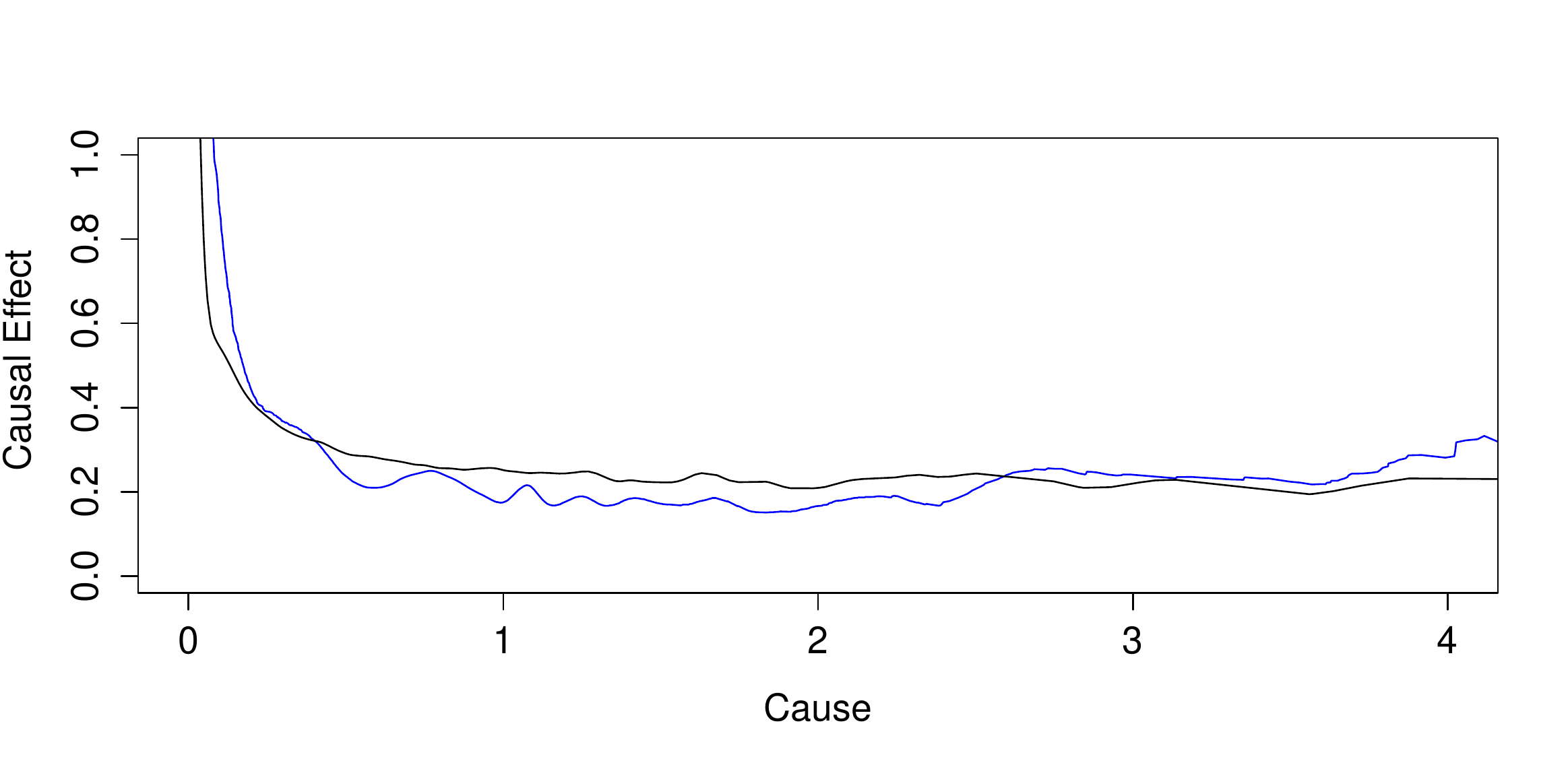}
     \label{EXP}
\caption{Exponential nonparanormal simulation: black line shows the true causal effect and the blue line represents the causal effect estimated by our NCE method. } 
  \label{EXP}
\end{figure}

\section{TiMet: circadian regulation in \emph{Arabidopsis Thaliana}}
\label{sec:real data}
In this section, we illustrate our proposed approach by applying it to a time course gene expression dataset related to the
study of circadian regulation in plants. The data used in our study come from the EU project TiMet (FP7 245143, 2014), whose objective is the elucidation
of the interaction between circadian regulation and metabolism in plants.

\begin{figure}[!h]\centering%
     \includegraphics[scale=.3,angle=0.5]{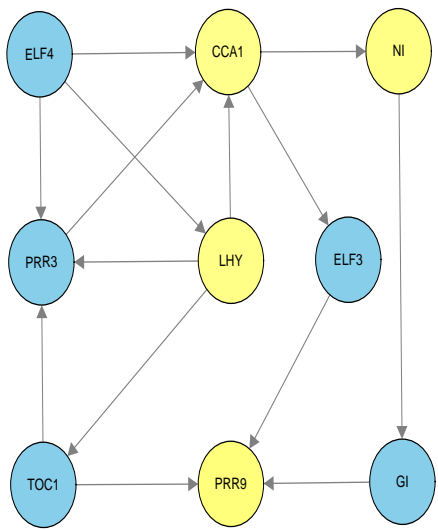}
\caption{The inferred causal network among the circadian clock genes for  \emph{Arabidopsis thaliana}. Yellow nodes refer to morning genes and blue nodes to evening genes. } 
  \label{fig:arab DAG}
\end{figure}

The data consist of transcription
profiles for the core clock genes from the leaves of various genetic variants of \emph{Arabidopsis Thaliana}, measured with qRTPCR. The transcription profiles of the core clock genes  \citep{aderhold2014statistical,pokhilko2010data,guerriero2012stochastic} were recorded: LHY, CCA1, PRR3, NI (PRR5), PRR9, TOC1, ELF3, ELF4 and GI. The plants were grown in
the following 3 light conditions: a diurnal cycle with 12 hr light and 12 hr darkness (12L/12D), an extended night
with full darkness for 24 hrs, and an extended light with constant light for 24 hrs. An exception is the
ELF3 mutant, which was grown only in 12L/12D condition. Samples were taken every 2 hrs to measure mRNA concentrations.
We consider the same group of nine genes, which from previous studies are known to be
involved in circadian regulation \citep{grzegorczyk2011improvements,grzegorczyk2011non, grzegorczyk2008modelling, jia2009analysis}. They consist of two groups of genes: “Morning genes”, which are
LHY, CCA1, PRR9, and PRR5, whose expression peaks in the morning, and  “Evening genes”, including TOC1,
ELF4, ELF3, GI, and PRR3, whose expression peaks in the evening. The expressions for all the genes are strictly positive and highly right-skewed. 

In traditional analysis of microarray data, the data are typically log-transformed. Especially when using the data for prediction, such transformations are sensible as they typically stabalize variances and make down-stream analyses more robust. In our case, however, our aim is to describe the system. We are \emph{not} interested in the causal effect of the log-transformed variables, but we are interested in the causal effects of the original variables. For this reason, we consider the raw data directly, since this is the scale on which we would like to evaluate the system. 

For inferring the underlying causal CPDAG, we considered the RPC-algorithm in the version that uses the
Kendall’s  tau -- results using  Spearman's rho were almost the same. The CPDAG contains three Markov equivalent DAGs.
One of these three causal networks among the genes is displayed in Figure \ref{fig:arab DAG}. For all three causal DAGs, we infer the causal effects between the genes and these are shown as three lines in each of the plots in Figure \ref{fig:causal for arab}. A striking feature is that most of the causal effects shrink towards zero for large values of the cause.  


The morning gene CCA1 was found to repress the evening genes EFL3 and NI. Among the evening genes, EFL4 and TOC1 have the strongest effect on both other evening and morning genes. The evening gene ELF has positively affects CCA1. It also has a negative effect on LHY. Moreover, the evening genes ELF3, GI and TOC1 are involved in the activation of the  morning
gene PRR. The morning gene LHY has a almost constant effect on the evening genes ELF4, TOC1 and EFL4. In particular ELF4 interacts positively  
with NI and CCA1 and negatively with LHY. Many of these results
are consistent with the findings in \cite{grzegorczyk2011improvements,grzegorczyk2011non}, \cite{aderhold2014statistical} and references therein, as well as with 
the biological network referred to in \cite{jia2009analysis}.

\begin{figure}[!h]\centering%
     \includegraphics[scale=.7,angle=0]{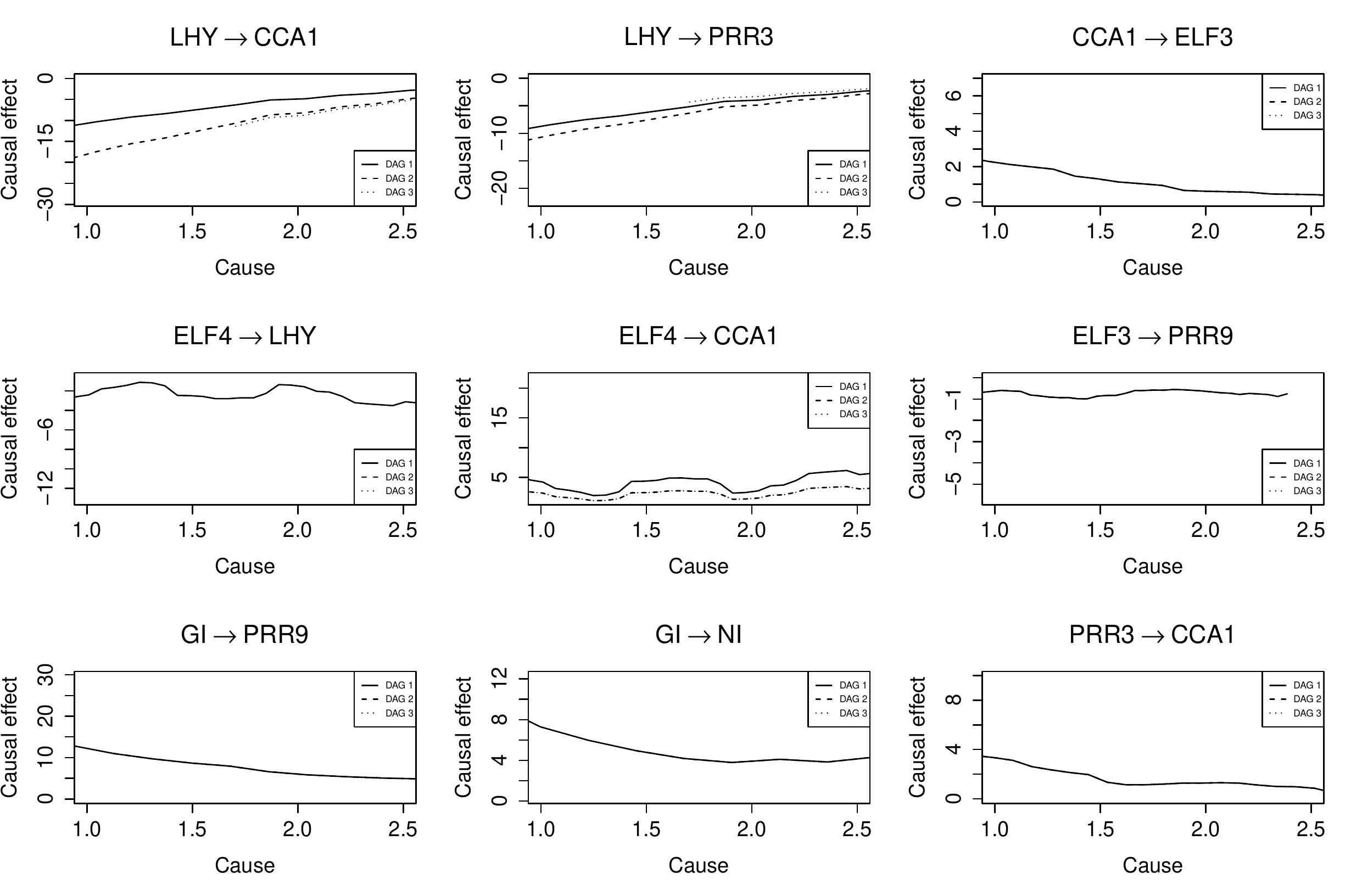}
\caption{Causal effects for the circadial gene interaction network in \emph{Arabidopsis thaliana}. Whereas ELF3 and ELF4 have almost constant causal effects, the others have a distinctive shrinkage in their causal effects for larger values of the cause. } 
  \label{fig:causal for arab}
\end{figure}



\section{Conclusion}
\label{s:discuss}

In this paper, we have derived an explicit formula for describing a causal effect for a flexible class of distributions, the nonparanormal. These distributions are especially useful for real-life observational studies, where normality assumptions are often not warranted. We  presented a simple method, NCE, to estimate these causal effects nonparametrically, based on a first order approximation of the general causal effect formula. It is able to capture a large range of non-linear causal effect. In a simulation study, we have shown that the estimation method works well, particularly away from the tails of the data. We have also applied the method to an \emph{Arabidopsis Thaliana} circadian clock network. The estimated causal effects all reveal a tendency for the causal effects to shrink to zero for large values of the cause, which means that gene regulation shows effect saturation for high levels of the regulator. This is in correspondence with simple Michaelis-Menten kinetic models, often used to model gene regulation.

\appendix
\section{Appendix: calculation of equation (\ref{eqn:Gaussian copula for exp})}


If we have a bivariate Gaussian copula, with dependence parameter $\rho$, we have
\begin{eqnarray*}\nonumber
 E(Y |X = x) &=& \int_{\mathbb{R}} y {\frac{\partial}{\partial y}} \Phi(\frac{\Phi^{-1}(F(y))-\rho \Phi^{-1}(G(x))}{\sqrt{1-\rho^ 2}})dy \\
\end{eqnarray*}
We choose both marginal distributions $F(y)$ and $G(x)$  are exponential with parameter $\lambda_y,\lambda_x >0$, respectively. Hence,
 \begin{eqnarray*}
 E(Y |X = x)&=& \int_{\mathbb{R}} y  \phi(\frac{\Phi^{-1}(F(y))-\rho \Phi^{-1}(G(x))}{\sqrt{1-\rho^ 2}})\frac{1}{{\sqrt{1-\rho^ 2}}}{\frac{\partial}{\partial y}}\Phi^{-1}(F(y))dy\nonumber\\
  &=&\frac{1}{{\sqrt{1-\rho^ 2}}} \int_{\mathbb{R}} y  \phi(\frac{\Phi^{-1}(F(y))-\rho \Phi^{-1}(G(x))}{\sqrt{1-\rho^ 2}})\frac{1}{\phi(\Phi^{-1}(F(y)))}{\frac{\partial}{\partial y}}(F(y))dy\nonumber\\
&=&\frac{1}{{\sqrt{1-\rho^ 2}}} \int_{\mathbb{R}} y \phi(\frac{\Phi^{-1}(F(y))-\rho \Phi^{-1}(G(x))}{\sqrt{1-\rho^ 2}})\frac{f(y)}{\phi(\Phi^{-1}(F(y)))}dy\nonumber\\
 &=&\frac{1}{\sqrt{1-\rho ^2}}\int_{\mathbb{R}} y \phi(\frac{\Phi^{-1}(1-\exp(\lambda_y y))-\rho \Phi^{-1}(1-\exp(\lambda_x x))}{\sqrt{1-\rho^ 2}}) \frac{{\exp({-\lambda_y y})}}{\phi(\Phi^{-1}(1-\exp(\lambda_y y)))} dy \nonumber 
 \end{eqnarray*}
Thefore, for a bivariate nonparanormal with exponential marginals, we find the following causal effect. Let assume $t=\frac{\Phi^{-1}(1-\exp(\lambda_y y))-\rho \Phi^{-1}(1-\exp(\lambda_x x))}{\sqrt{1-\rho^ 2}}$,
 \begin{eqnarray*}
\mbox{CE}(Y| X_i = x)&=&{\frac{\partial}{\partial x}}E(Y |X = x)\nonumber\\
&=&\frac{1}{\sqrt{1-\rho ^2}}\int_{\mathbb{R}}y{\frac{\partial}{\partial x}}\phi(t) \frac{{\exp({-\lambda_y y})}}{\phi(\Phi^{-1}(1-\exp(\lambda_y y)))} dy \nonumber\\
&=&\frac{1}{\sqrt{1-\rho ^2}}\int_{\mathbb{R}} y \phi^{'}(t)\frac{-\rho\ {\exp({-\lambda_x x})}}{\sqrt{1-\rho ^2}\phi(\Phi^{-1}(1-\exp(\lambda_x x)))} \frac{{\exp({-\lambda_y y})}}{\phi(\Phi^{-1}(1-\exp(\lambda_y y)))} dy \nonumber\\
&=& - \frac{\rho}{1-\rho ^2}\int_{\mathbb{R}} y\phi^{'}(t)\frac{{\exp({-\lambda_x x})}{\exp({-\lambda_y y})}}{{\phi(\Phi^{-1}(1-\exp(\lambda_y y)))}{\phi(\Phi^{-1}(1-\exp(\lambda_x x)))}} dy\nonumber
\end{eqnarray*}
where $\Phi$ and $\phi$ are cumulative distribution and  density function of Gaussian, respectively.
\label{lastpage}

\bibliography{causal-effect}

\end{document}